\DeclareSymbolFont{bbold}{U}{bbold}{m}{n}
\DeclareSymbolFontAlphabet{\mathbbold}{bbold}
\theoremstyle{plain}
\newtheorem{theorem}{Theorem}
\newtheorem{lemma}{Lemma}
\newtheorem{corollary}{Corollary}
\theoremstyle{definition}
\newtheorem{definition}{Definition}
\newcommand{\thmref}[1]{Theorem~\ref{#1}}
\newcommand{\E}{\mathbb{E}}
\renewcommand{\Pr}[2][]{\mbox{\rm\bf Pr}_{#1}\left[#2\right]}
\newcommand{\Ex}[2][]{\mbox{\rm\bf E}_{#1}\left[#2\right]}
\newcommand{\footremember}[2]{%
   \footnote{#2}
    \newcounter{#1}
    \setcounter{#1}{\value{footnote}}%
}
\newcommand{\footrecall}[1]{%
    \footnotemark[\value{#1}]%
} 
\begin{document}

\title{Exponential Segregation in a Two-Dimensional\\ Schelling Model with Tolerant Individuals}
\author{Nicole Immorlica\footremember{msrne}{Microsoft Research New England, Cambridge, MA.
                                   \texttt{\{nicimm,brlucier\}@microsoft.com}}
    \and Robert Kleinberg\footremember{cornellmsr}{Cornell University, Ithaca, NY; and 
                                   Microsoft Research New England, Cambridge, MA. 
                                   \texttt{rdk@cs.cornell.edu}}
    \and Brendan Lucier\footrecall{msrne}
    \and Morteza Zadimoghaddam\footremember{google}{Google Research, New York, NY. 
                                   \texttt{zadim@google.com}. Parts of this work were completed 
                                   while the author was a student at MIT.}
}
\date{}

\maketitle

\begin{abstract}
We prove that the two-dimensional Schelling segregation model yields monochromatic
regions of size exponential in the area of individuals' neighborhoods, provided that the
tolerance parameter is a constant strictly less than 1/2 but sufficiently close to it.
Our analysis makes use of a connection with the first-passage percolation model from the
theory of stochastic processes.
\end{abstract}

\section{Introduction}
\label{sec:intro}

Almost 50 years ago, a landmark paper
by the economist Thomas Schelling~\cite{schelling_dynamic_1971} 
introduced a model of residential segregation in cities 
that reshaped sociologists' understanding of the underlying
basis for that process. Schelling's paper also became a 
seminal document in the field of network science,
providing what has since been described as ``the earliest
formally studied report of rampant changes in 
networks''~\cite{Hexmoor}. Yet despite the model's broad 
influence, our understanding of its predictions has, until 
recently, been largely limited to reporting the results of simulations. 
Our objective in this paper is to provide a rigorous analysis of Schelling's model in a setting that closely mimics the topology of real cities.  We find, for the setting of parameters we study, the model predicts a large degree of segregation.

Schelling segregation is a simple and intuitively appealing
stochastic process, in which 
individuals of two colors located on a graph (typically a 
one- or two-dimensional grid) randomly shift positions to 
move away from regions in which the local density of like-colored 
individuals is below a specified ``tolerance threshold''.
Using pennies and nickels on graph paper, along with a table of 
random digits, Schelling simulated the process by hand in 
the 1960's and concluded that it almost invariably reached 
a final, stable configuration with a distinctly segregated  
pattern. Thousands of researchers have reconfirmed this 
observation in computer simulations of the model and countless
variants. The finding that segregation could arise due to
individual decisions reflecting only a weak preference for
being in the majority, though it may appear obvious in
hindsight, strongly influenced the debate about the causes
of urban segregation in the 1980's and 1990's.
\citet{clark_understanding_2008} write,
``To that point, most social scientists offered an explanation 
for segregation that invoked housing discrimination, 
principally by whites, as the major force in explaining why 
there was residential separation in the urban fabric \dots
The Schelling model was critical in providing a theoretical 
basis for viewing residential preferences as relevant to 
understanding the ethnic patterns observed in metropolitan areas.''

Rigorous quantification of the extent of segregation in
Schelling's model has lagged far behind the literature
reporting simulation results. The first theoretical
results on the model, due to \citet{young_individual_2001},
modified the segregation process by perturbing its transition
rule to induce an ergodic Markov chain. This enables
identifying the set of 
{\em stochastically stable states}, i.e., those whose stationary 
probability remains bounded away from zero as the
magnitude of the perturbation converges to zero.
Analyzing a one-dimensional version of Schelling 
segregation, Young identified the set of stochastically
stable states as those in which total segregation---partitioning
the world into two monochromatic intervals---arises.
\citet{zhang_dynamic_2004} extends this result to 
two dimensions, again finding that the stochastically
stable states are those that minimize the length of 
the interface between the two types of individuals. 

There are two reasons why the results of stochastic
stability analysis are not entirely satisfactory as
an account of segregation. First, unlike 
Schelling's original model, stochastic stability
predicts outcomes in which the pattern of segregation
differs from patterns observed in real cities,
both quantitatively (as expressed, for example, by
the fraction of people having at least one oppositely-colored
neighbor) and its extreme geometric regularity. Second,
as was observed by \citet{mobius2000formation}, the
Markov chain analyzed by Young and Zhang has a very
long mixing time; it converges much more rapidly to 
moderately-segregated {\em metastable states} which
tend to persist for an exponential number of steps before
giving way to the stationary distribution itself,
which places most of its probability mass on totally
segregated states.

The issue of mixing time in the two-dimensional case 
was considered explicitly by Bhakta, Miracle and Randall \cite{Bhakta2014}.  
Bhakta et al.\ consider an ergodic Markov chain model of
segregation motivated by the theory of spin systems from statistical physics.
Their Markov chain is a perturbed, reversible version of Schelling's dynamics,
parameterized by a temperature (i.e., noise) parameter $\lambda$.
When $\lambda$ is small, Bhakta et al.\ show that the process
mixes slowly to a limiting distribution that is essentially fully segregated, in the
sense that nearly all nodes have the same color with high probability.
However, when $\lambda$ is large, the Markov chain
mixes rapidly to a highly integrated state in which it's unlikely to
observe large regions of high bias.

In contrast to the works above, we will focus on the unperturbed
Schelling model.  
The first rigorous analysis of an unperturbed Schelling
model appeared in~\cite{BIKK}, where the authors analyzed
a stochastic process on a ring of length $n$, with each
individual's neighborhood defined as the set of sites
located within $w$ or fewer hops of its own location. 
In each step
of the dynamics, two individuals are chosen at random and
they swap positions if they are oppositely colored and the
fraction of like-colored individuals in their neighborhood
is less than the tolerance threshold, $\tau$. For
$\tau=1/2$, \citet{BIKK} showed that the average
length of the monochromatic intervals in the final,
stable configuration is bounded by $O(w^2)$; this
was subsequently improved to a tight bound of 
$O(w)$~\cite{IKK-unpublished}.

Counterintuitively, \citet{BELP14} proved that
the size of segregated neighborhoods
in one dimension increases sharply---from linear in $w$
to exponential in $w$---when the
tolerance threshold is changed from $\tau=1/2$ to $\tau=0.49$ or, 
more generally, when $0.3531 < \tau < 0.5$. Thus,
when individuals become more tolerant, the result is 
actually an {\em increase} in segregation. The 
explanation for this counterintuitive phenomenon
is that when $\tau < 1/2$, all but an exponentially
small fraction of the individuals are satisfied with
the composition of their neighborhood in the initial,
random configuration. The exponentially rare regions
of high bias serve as condensation nuclei for 
``monocultures'' that spread like wildfire, 
eventually engulfing all surrounding territory 
until their boundary abuts the boundary of an
oppositely-colored monoculture.

\begin{figure}
\begin{center}
\includegraphics[height=0.2\textheight]{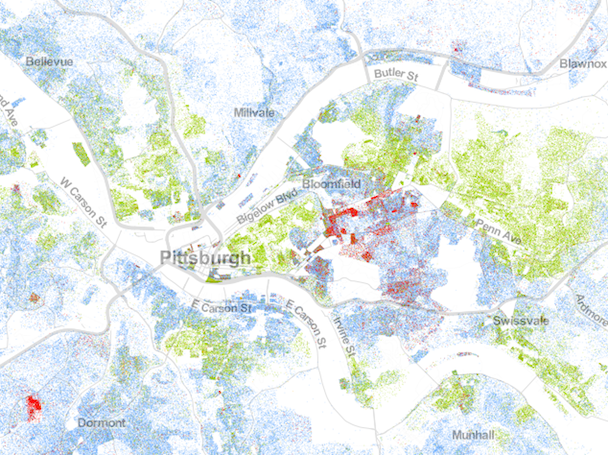}
\end{center}
\caption{\small Map of Pittsburgh, Pennsylvania, color-coded by 
race. Source: {\em The Racial Dot Map}, \citet{racial_dot_map}.}
\label{fig:housing}
\end{figure}

Understanding the behavior of the unperturbed Schelling model in two-dimensional grids, analytically, has remained a challenging open problem, but a vitally important one since the geography of urban areas is well approximated by a grid graph.  Unfortunately, the aforementioned results on the unperturbed Schelling model apply only to one-dimensional rings, and the ergodic Markov chain results, as we have argued above, do not predict the intricate patterns of segregation observed in simulations of the Schelling model as well as in real-life housing maps color-coded by ethnicity.  (See \cref{fig:housing}.)  
In a very recent unpublished manuscript, 
\citet{BELP15} achieve the first rigorous 
results on a two-dimensional, unperturbed 
Schelling model. Their results pertain to 
a model that differs from the one described
above in two key aspects. First, rather than
choosing two individuals at random and
swapping their locations if they are unhappy,
the model chooses one random individual and
changes its color if it is unhappy. In terms
of modeling segregation, this process could 
be justified, for example, by the assumption
that individuals who are unhappy with the 
ethnic composition of their neighborhood
move away from the city altogether, creating
housing vacancies that are filled using an
infinite supply of newcomers who are happy
to move to any neighborhood whose composition
satisfies their tolerance threshold. Second,
Barmpalias et al.\ assume that the two types
of individuals have potentially different
tolerance thresholds $\tau_\alpha, \tau_\beta$,
and most of their results pertain to the 
case $\tau_\alpha \neq \tau_\beta$. Both of 
these assumptions seem fairly justifiable, 
at least on intuitive grounds, when
modeling how the composition of urban neighborhoods
may change over time. In combination, however, the 
two assumptions yield results which predict either
total integration
or almost total eradication of the more tolerant population.
In more detail, when $\tau_\alpha, \tau_\beta < \frac14$,
they prove that
the fraction of nodes whose color changes during the
entire segregation process is $o(1)$, leaving most 
neighborhoods as integrated at the conclusion of the 
process as they were at its outset. 
The remaining theorems in~\cite{BELP15} 
pertain to cases in which $\tau_\alpha \neq \tau_\beta$,
and state that in those cases one color takes over 
completely, leaving at most $o(1)$ fraction of nodes
labeled with the minority color. Thus, while the
results of~\citet{BELP15} represent a breakthrough
in that they constitute the first rigorous analysis 
of an unperturbed Schelling model in two or more 
dimensions, they still fall short of providing a 
theoretical justification for the patterns of 
segregation seen in real housing data and in simulations
of Schelling's original model, which are characterized
by large monochromatic regions of both colors.

In this paper we carry out an analysis of the size of
monochromatic regions produced by the two-dimensional
Schelling segregation model when both types of individuals
have tolerance threshold $\tau < \frac12$. We assume that
the process plays out in a sufficiently large grid\footnote{
For convenience, we will assume in our analysis that the 
grid is toroidal.} populated by
individuals whose `neighborhood' is defined as the set of
all nodes at $\ell_\infty$ distance $w$ or smaller. An
individual is considered unhappy if the fraction of
like-colored individuals in its neighborhood is less
than $\tau$. In each step of the segregation process,
one individual is chosen at random and, if it is unhappy,
the label of its node changes to the opposite color. 
We assume that $\tau$ belongs to the interval 
$(\tau_0, \frac12)$ for some absolute constant $\tau_0$.
Our main result asserts that for any node,
the expected distance from that node to the 
nearest oppositely-colored node, 
when the segregation process eventually stabilizes,
is $e^{\Theta(w^2)}$. 
Thus, starting from an initially integrated
configuration, the process converges to a
pattern containing exponentially large 
segregated regions of both types.

One particularly interesting implication of our result is that
the expected size of the segregated regions in
the final configuration is non-monotone with
respect to the tolerance parameter $\tau$.
As mentioned above, Barmpalias et al. \cite{BELP15}
show that for $\tau < \frac14$, most nodes retain
their initial color and hence the expected size
of the final monochromatic regions is $O(1)$.  
On the other hand, for $\tau$ close to
$1/2$, we show that the expected size is 
$e^{\Theta((\frac12 - \tau)^2 w^2)}$, which is
exponentially large but decreasing in $\tau$.  One can therefore
find $\tau_1, \tau_2$ with $\frac14 < \tau_1 < \tau_2 < \frac12$ such that the expected
size of the final monochromatic regions is exponentially large, 
but strictly smaller for $\tau_2$ than for $\tau_1$.  
Thus, as in the single-dimensional
case \cite{BELP14}, a higher degree of tolerance can actually 
lead to an overall increase in segregation.

The intuition justifying our results is similar
to that which underlies the results 
of \citet{BELP14} in the one-dimensional case:
in the initial labeling of the grid, the 
probability of a node being unhappy is 
inverse-exponential in the area of its 
neighborhood (in this case, $\Theta(w^2)$).
When the proportion of individuals of one type
in a region becomes sufficiently high, the individuals
of the opposite type move away from that region,
starting a ``snowball effect'' whereby the region
becomes completely monochromatic, then grows to 
engulf the surrounding areas of the grid. This 
snowball effect continues unabated until the growing
monoculture encounters a region in which the
population of oppositely-colored individuals is dense
enough to halt its spread. The nearest such region is
located at a distance of $e^{\Theta(w^2)}$ from the 
point where the monoculture originated. 
See Figure~\ref{fig:sim} for an illustration.

\begin{figure}
\begin{center}
\includegraphics[height=0.17\textheight]{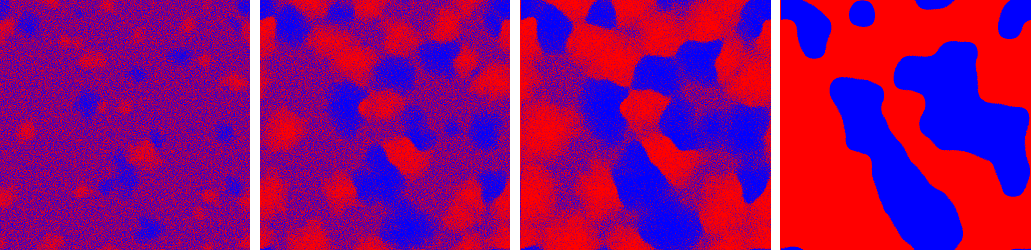}
\end{center}
\caption{\small Snapshots from a simulation of the Schelling process with $n=750$, $w = 20$, $\tau = 0.40$.}
\label{fig:sim}
\end{figure}

Although our result stems from the same intuition
as the result on exponential segregation in one
dimension, the techniques used to prove it are quite
different. Two technical 
aspects of the proof highlight key distinctions
between the one-dimensional and two-dimensional
processes: one concerns the local dynamics leading
to the genesis of the ``snowball effect'' mentioned
in the preceding paragraph, the other concerns the
global dynamics by which competing, oppositely-colored
regions undergoing this snowball effect grow to occupy
a large region. In more detail, the first distinction
between the one-dimensional and two-dimensional processes
concerns the amount of bias  (i.e., density of the
majority type)  necessary to trigger the rampant
growth of a monochromatic region.
In one dimension, \citet{BELP14} show that
any interval containing an unhappy node has
a tendency to grow larger until it encounters
an (exponentially rare) stable configuration
that can block it. In two dimensions, 
most nodes that are unhappy in the initial
configuration trigger only a bounded
number of changes to the labeling of the grid, 
all located within distance $O(w)$ of the 
original unhappy node, ending at a state in 
which all nearby nodes are happy. Instead,
the long-range propagation of segregation
is attributable to nodes whose 
neighborhood has a slightly larger amount
of bias in the initial configuration. These
``viral'' nodes are exponentially more rare
than unhappy nodes, but still occur with
sufficiently high frequency to account for
the $e^{\Theta(w^2)}$ scaling in our main
result. 
The spread of the ``monocultures'' originating at 
these viral nodes bears some similarity to a 
competitive contagion process. In one dimension
the analysis of this competitive contagion is 
aided by the simplicity of the geometry: a monoculture can only be 
blocked by encountering a blocking structure immediately to its
left or right. In two dimensions the geometry 
becomes more complicated, but we overcome this 
complexity by coupling the segregation process
with first-passage percolation (FPP), enabling
the application of powerful theorems about the
tendency of the latter process to approximate a
well-defined limiting shape.

The rest of our paper is organized as follows.
In \cref{sec:model} we formalize the Schelling
segregation process, introduce our notation, and
formally state our main result.
In \cref{sec:outline} we outline our proof strategy.
\cref{sec:viral} and \cref{sec:percolation} 
present the two main steps of the proof,
and \cref{sec:mainthm} completes the proof.

\subsection{Other Related Work}

Our analysis borrows from a literature on stochastic processes on networks that spans computer science, discrete mathematics, and probability theory.  There has been a rich line of recent work analyzing the mixing time for various local spin models on lattices \cite{Dyer2004,Lubetzky2012,Randall2013,Goldberg2005,Randall2007}.  It is often necessary in such analyses to bound the extent to which one site can exert influence on another, and our work shares a similar flavor.  
Such analyses have been used recently in computer science to derive connections to computional barriers in random instances of hard optimization problems \cite{Achlioptas2008,Sly2010}, and as the basis for new counting algorithms \cite{Jerrum1995,Bandyopadhyay2006,Bayati2007,Maneva2007}.

Our work is also related to the literature on competitive influence on networks, in which two or more processes (e.g., adoption of competing products) spread through a network, each having a suppressing effect on the other(s).  A common goal is to determine how the network structure and initial configuration influence the size of the contagions in the final state.  Such processes have been studied through the lens of strategic choice \cite{Montanari2009,Montanari2010}, as well as purely stochastic models of influence propagation \cite{Borodin2010,goyal2014competitive,Bharathi2007}.  Our analysis of the Schelling model proceeds by drawing connections with a model of epidemic spread, and bounding the extent to which one process can spread before it is affected by a competing process.

\section{Model}
\label{sec:model}

We consider a network of $n^2$ nodes arranged in an $n \times n$ two-dimensional torus.  Given any $r \geq 1$, the $r$-{\em neighborhood} of a node $x$, written $N_r(x)$, is the set of nodes $y$ with $||x-y||_{\infty} \leq r$.  Note that this is simply a square of side-length $(2r+1)$, centered at $x$.
A \emph{configuration} is an assignment of {\em spins} (or colors) to nodes, where the set of spins is $\{+1, -1\}$.  

The model is parameterized by $w \geq 1$ and $\tau \in [0,1]$.  We consider a continuous-time dynamics in which the configuration evolves over time.  Write $\sigma_t$ for the configuration at time $t$.  In the initial configuration $\sigma_0$, each node's spin is uniformly and independently chosen at random.  Write $N(x)$ for $N_w(x)$, which we will sometimes call simply the \emph{neighborhood} of $x$.  The \emph{bias} at a node $x$ at time $t$, $b_t(x)$, is the sum of spins of the nodes in $N(x)$.  That is, $b_t(x) = \sum_{y \in N(x)} \sigma_t(y)$.  
For any constant $\delta \in [0,1]$, we say that a node $x$ is \emph{$\delta$-positively-biased} (resp., \emph{$\delta$-negatively-biased}) at time $t$ if $b_t(x) > \delta|N(x)|$ (resp., $b_t(x) < - \delta|N(x)|$).  We say that a node is \emph{$\delta$-biased} if it is $\delta$-positively-biased or $\delta$-negatively-biased.

A node is said to be {\em unhappy} at time $t$ if fewer than a $\tau$ fraction of its neighbors share its color.  We will focus on the case $\tau = (1-\epsilon)/2$ where $\epsilon > 0$ is small.  Then node $x$ is unhappy precisely if it is $\epsilon$-biased and $\sigma_t(x) \cdot b_t(x) < 0$.  In our results, we will typically think of $w$ as being large relative to $1/\epsilon$, and $n$ as large relative to $w$.


Each node is assigned a Poisson clock that rings with rate $1$.  When a node's clock rings, if that node is unhappy, its spin is switched.  The dynamics ends when there are no unhappy nodes (which must occur after a finite number of switches when $\tau \leq 1/2$; see, e.g., \cite{zhang_residential_2004}).  Write $T$ for the time at which the dynamics ends, so that $\sigma_T$ is the final configuration.


A set of nodes is {\em monochromatic} at time $t$ if they all have the same spin in $\sigma_t$.
Write $M_t(x)$ for the largest monochromatic neighborhood containing node $x$ in configuration $\sigma_t$.  Write $r_t(x)$ for the radius of $M_t(x)$.  In the following statement, probabilities are taken over the initial configuration, the randomness in the dynamics, and the choice of node $x$.  We will prove:

\begin{theorem} \label{thm.stmt2}
Fix $w$, and take $n$ sufficiently large and $\epsilon > 0$ sufficiently small.  Then there exists a constant $c > 0$ such that $\Ex[x]{r_T(x)} \geq e^{c \epsilon^2 w^2}.$
\end{theorem}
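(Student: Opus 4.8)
The plan is to exhibit an event $\mathcal{G}$, of probability bounded below by an absolute constant, on which $r_T(x)\ge e^{c\epsilon^2 w^2}$; since $r_T(x)\ge 0$ always, this gives $\Ex[x]{r_T(x)}\ge \Pr{\mathcal{G}}\cdot e^{c\epsilon^2 w^2}$ and hence the theorem. Informally, $\mathcal{G}$ asks that near $x$ the initial configuration contains a rare ``viral'' pattern of one spin, that no competing pattern of the opposite spin (and no anomalously biased region) occurs within a much larger window, and that the surrounding randomness is otherwise typical; the monoculture nucleated at the viral pattern then engulfs a large square around $x$ before anything can stop it and is never eroded there afterward. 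The first ingredient, in \cref{sec:viral}, is a local nucleation lemma. Define a viral pattern to be, say, a $\Theta(w)\times\Theta(w)$ block of $\sigma_0$ in which every $w$-neighborhood is $\delta$-biased, where $\delta=3\epsilon$ (any fixed multiple of $\epsilon$ exceeding $2$ will do). A large-deviation estimate for a sum of $\Theta(w^2)$ independent signs shows each site hosts such a pattern with probability $p=e^{-\Theta(\epsilon^2 w^2)}$ --- exponentially rarer than an unhappy ($\epsilon$-biased) node, yet vastly larger than the probability $e^{-\Theta(w^2)}$ that a block of that size is biased by a near-$1$ amount. Then I must show that, run in isolation with typical surrounding spins, a viral pattern triggers a terminating cascade leaving a fully monochromatic square of side $\ge 2w$: the minority nodes of a sufficiently biased block are unhappy and flip, which only raises the local bias, so the block fills in. As emphasized in the introduction, an unhappy node alone does \emph{not} produce runaway growth --- one needs this slightly stronger, and hence much rarer, seed.

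The second ingredient, in \cref{sec:percolation}, is global propagation. Once a region is monochromatic $+1$ and contains a $2w$-square, any $-1$ node just outside it, such that more than a $\tfrac{1+\epsilon}{2}$-fraction of its neighborhood lies in the region, is unhappy and flips; so the region keeps growing, and (since $\epsilon<\tfrac12$) it does not stall even at a flat boundary unless the not-yet-converted territory ahead is anomalously negatively biased --- which, by a tail bound union-bounded over the $e^{\Theta(\epsilon^2 w^2)}$ sites in play, happens nowhere relevant, while small unfavorable patches are merely wrapped around. To control the speed and shape of the front I would couple it with first-passage percolation on a $\Theta(w)$-coarsened lattice, so that the passage time of a coarse edge stochastically dominates the time for the Schelling front to sweep the corresponding cell; this passage time has finite mean ($O(1)$ except on the anomalous cells, which are absent). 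The FPP shape theorem and Kesten's concentration bounds then give that an unobstructed monoculture contains a Euclidean ball of radius $\Omega(t)$ by time $t$ with high probability, that two competing monocultures meet along an interface that is flat to within $o(\epsilon w)$ over every $w$-window and is hence stable, and that sublinear FPP fluctuations prevent any conquered region from later being eroded wholesale; so near $x$ the final configuration $\sigma_T$ looks like the Voronoi cell, in the FPP metric, of the seed whose monoculture reaches $x$ first.

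Finally, \cref{sec:mainthm} assembles these. Put $R\asymp 1/\sqrt{p}=e^{\Theta(\epsilon^2 w^2)}$, chosen so that the expected number of viral patterns in $N_R(x)$ is a fixed constant. With probability $\Omega(1)$ there is a viral pattern $z\in N_R(x)$ of some spin $\chi$ and no $(-\chi)$-pattern in $N_{KR}(x)$ --- both events are increasing in the spins, so Harris's inequality multiplies their constant lower bounds, $\Theta(1)$ and $e^{-\Theta(K^2)}$ --- and, with probability $1-o(1)$, no anomalous block in $N_{KR}(x)$; here $K$ is a large absolute constant. On this event the $\chi$-monoculture from $z$ grows unobstructed throughout $N_{KR}(x)$, covering $N_R(x)\supseteq N_w(x)$ by time $O(R)$, long before any $(-\chi)$-monoculture (which must originate at distance $\ge KR$) can threaten $x$; and by the second ingredient $\sigma_T$ assigns $x$ to a $\chi$-region of inradius $\Omega(R)$, so $r_T(x)=\Omega(R)\ge e^{c\epsilon^2 w^2}$ for a suitable absolute $c>0$, provided $w$ is large (equivalently, $\epsilon$ small with $w$ large relative to $1/\epsilon$) and $n$ is large enough to contain these windows. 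The main obstacle is the second ingredient: making the FPP coupling rigorous despite the genuinely long-range, geometry-dependent flip rule, obtaining non-asymptotic control of the limit shape, and --- most delicately --- ruling out that a region, once monochromatic, is later eroded below size $R$. This is exactly where the ``powerful theorems about the tendency of [first-passage percolation] to approximate a well-defined limiting shape'' invoked in the introduction are used.
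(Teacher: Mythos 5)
Your high-level architecture (nucleation by a rare ``viral'' seed, then FPP coupling for the spread) matches the paper's, but there is a structural difference in how you book the probability, and that difference hides a real gap.

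You aim for a \emph{constant-probability} good event $\mathcal{G}$: pick $R \asymp p^{-1/2}$ so a viral seed lies in $N_R(x)$ with probability $\Omega(1)$, forbid opposite-sign viral seeds in $N_{KR}(x)$ (cost $e^{-\Theta(K^2)}$), forbid ``anomalous'' blocks (you claim probability $1-o(1)$), and conclude that on $\mathcal{G}$ the monoculture reaches $x$ and $r_T(x) = \Omega(R)$. The trouble is that being viral is \emph{much} rarer than being merely $\epsilon$-biased: with your threshold $\delta = 3\epsilon$, $p \asymp e^{-18\epsilon^2 w^2}$ while $q := \Pr{\text{$\epsilon$-biased}} \asymp e^{-2\epsilon^2 w^2}$, so with $R \asymp p^{-1/2}$ the window $N_{KR}(x)$ typically contains on the order of $q/p = e^{16\epsilon^2 w^2}$ $\epsilon$-negatively-biased (hence potentially unhappy) nodes. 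You dismiss these as ``small unfavorable patches'' that the positive front ``merely wraps around,'' but this claim is never established, and it is precisely the step the paper goes out of its way \emph{not} to need. The paper's FPP coupling for the negative side (its Lemma on the spread of unhappiness) is valid only under the hypothesis that there are \emph{no} $\epsilon$-negatively-biased nodes at all in $N_R(x)$; if such nodes are present they become initial seeds for the competing process and the coupling gives no control. So under $\mathcal{G}$ as you have defined it, the FPP race you want to run has no legitimate starting point.

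The paper resolves this by abandoning constant probability and instead paying a probability cost to sterilize the window. It chooses $R$ so that $N_R(x)$ contains \emph{no} $\epsilon$-biased node with probability $\Theta(1)$ (so $R \asymp q^{-1/2}$), then looks for a viral seed only inside a much smaller ball $N_r(x)$ with $r = R/\mathrm{poly}(w)$, accepting that this happens with probability only $\Theta((r/R)^2)\cdot e^{-\Theta(\epsilon^3 w^2)}$. This is compensated because the radius delivered on the good event is $\Omega(r) = e^{\Theta(\epsilon^2 w^2)}/\mathrm{poly}(w)$, and for small $\epsilon$ the $\epsilon^3 w^2$ in the probability exponent is dominated by the $\epsilon^2 w^2$ from the radius. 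Critically, the conditional probability factor $e^{-\Theta(\epsilon^3 w^2)}$ comes from the carefully tuned viral threshold $\epsilon + \epsilon^2$ (not $3\epsilon$): given a node is $\epsilon$-biased, the extra bias needed to be $(\epsilon+\epsilon^2)$-biased costs only $e^{-\Theta(\epsilon^3 w^2)}$, which is what keeps the bookkeeping in the black. With your threshold $3\epsilon$ the analogous conditional factor is $e^{-\Theta(\epsilon^2 w^2)}$ with a constant large enough to wipe out the gain from $r$, so even translating your proposal into the paper's bookkeeping would fail. To repair your route you would either need to prove the ``wrap-around'' lemma (that a front whose boundary is near-balanced absorbs any non-viral $\epsilon$-biased patch with high probability, uniformly over the exponentially many such patches in the window), or tighten your viral threshold to $\epsilon + O(\epsilon^2)$ and adopt the paper's small-$r$ / larger-$R$ decomposition.

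One smaller point: you cannot union-bound over ``$e^{\Theta(\epsilon^2 w^2)}$ sites in play'' against the event ``anomalously negatively biased'' unless ``anomalous'' means bias of order $w^2$ rather than $\epsilon w^2$; if it means the latter the union bound only gives a constant, not $1-o(1)$, which is another symptom of the same issue.
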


We will also prove that the dependence on $\epsilon$ and $w$ in the exponent of the bound in Theorem~\ref{thm.stmt2} is asymptotically tight; the proof appears in Appendix \ref{sec:upperbound}.

\section{Proof Strategy}
\label{sec:outline}

\thmref{thm.stmt2} states that the expected radius of the largest monochromatic ball containing an arbitrary node is exponential in $w^2$.  Our proof strategy tracks the following intuition.  Since $\tau$ is bounded away from $1/2$, most nodes in the initial configuration are happy.  Unhappy nodes are exponentially unlikely, and appear in clusters that are exponentially far apart.  Some of these unhappy nodes have neighborhoods that are so biased, in the initial configuration, that they have a non-trivial probability of turning the surrounding area monochromatic as the unhappy nodes switch sign.  This, in turn, causes other nearby nodes to become unhappy and switch sign.  These ``viral'' regions will then grow, causing a larger and larger area to become heavily biased and then monochromatic.  Such a monochromatic region will spread until running up against another region that is heavily biased in the opposite sign.  Since these regions grow from initially unhappy nodes that are exponentially far apart, we expect them to grow exponentially large before running up against each other.

This intuition suggests the following steps for a formal proof:

\medskip \noindent
\textbf{1.} Define an unhappy node $x$ to be {\em viral} if its neighborhood has absolute bias at least $(\epsilon+\epsilon^2)(2w+1)^2$.  We prove that this bias is sufficiently high that $N_{w/4}(x)$ is very likely to become monochromatic within a short (polynomial in $w$) amount of time.  (Lemma \ref{lem:viral.sequence} and Lemma \ref{lem:viral.to.monochrome}.)

\medskip \noindent
\textbf{2.} We argue that the probability that a node is viral, given that it is unhappy, is significantly higher than the probability that any given node is unhappy.  Specifically, the probability that a node is unhappy in the initial configuration is at most $e^{-\Theta(\epsilon^2 w^2)}$, whereas the probability that a node has enough initial bias to be viral given that it's unhappy is at least $e^{-\Theta(\epsilon^3 w^2)}$.  (Lemma \ref{lem:viral.given.unhappy}.)

\medskip \noindent
\textbf{3.} Fix an arbitrary node $x$.  Let $R$ be chosen so that the probability there is no unhappy node in $N_R(x)$, in the initial configuartion, is approximately $1/2$.  Then $R=e^{\Theta(\epsilon^2 w^2)}$.  We will choose some $r < R$, and consider the event that there is some viral node $y \in N_r(x)$, but no unhappy node of the opposite spin lies in $N_R(x)$.  We prove that this event does not have too small a probability.  Indeed, the closest unhappy node to $x$, say $y$, is not too unlikely to be viral; $y$ will appear in $N_r(x)$ with probability $\Theta(r^2 / R^2)$; and even if we condition on $y$ being viral, this conditioning only makes it less likely that any other node in $N_R(x)$ is biased in the opposite direction. (Section \ref{sec:mainthm}.)

\medskip \noindent
\textbf{4.} Finally, we argue that if the events from the previous step come to pass, then (with high probability) node $y$ will eventually generate a monochromatic region that contains node $x$.  This proof uses a connection to first passage percolation, which bounds the rate at which an infectious process spreads in an infinite lattice.  We prove that monochromatic regions spread at a certain rate, as long as the surrounding region does not contain nodes that are sufficiently biased in the opposite color.  We also show that unhappiness can spread through the lattice at a (potentially faster) rate, again using a connection to first-passage percolation.  We must therefore settle a race condition: we prove that as long as $r$ is sufficiently smaller than $R$, then the monochromatic region will engulf node $x$ before it can be influenced by any initially unhappy node of the opposite color.  Even under a pessimistic assumptions about the rate by which viral nodes turn neighboring regions monochromatic, setting $r = R/w^3$ will be sufficient to establish that the viral region spreading from $y$ will ``reach'' node $x$ before it can be influenced by any node lying outside $N_R(x)$.  (Lemma \ref{lem:unhappy} and Lemma \ref{lem:big.ball}.)

\medskip \noindent
\textbf{5.} Putting the above pieces together, we can conclude that at some point in time, $x$ will lie in a monochromatic region of radius $\Omega(r) = e^{\Omega(\epsilon^2 w^2)}$, with probability at least $\Omega(\frac{r^2}{R^2}) \cdot e^{-O(\epsilon^3 w^2)}$, which is $e^{\Omega(\epsilon^2 w^2)}$ in expectation.  The main result then follows from the fact that such large monochromatic regions will necessarily persist until the final configuration (Lemma \ref{lem:ball.persists}).



\section{Viral Nodes}
\label{sec:viral}


We will define a node to be viral if its neighborhood is sufficiently biased.  

%
\begin{definition}
We say that a node $x$ is {\em viral} if it is $(\epsilon+\epsilon^2)$-biased in the initial configuration.
\end{definition}

The amount of bias needed to be viral is greater than the amount of bias needed to be unhappy.  The following lemma shows that this additional amount of bias is modest enough that any given unhappy node is not too unlikely to be viral.

%
%
\begin{lemma}
\label{lem:viral.given.unhappy}
There exist constants $c_1$ and $c_2$ such that
\[ \Pr[]{v \text{ is viral } | v \text{ is $\epsilon$-biased }} \geq c_1 e^{-c_2 \epsilon^3 w^2}. \]
\end{lemma}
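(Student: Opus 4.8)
The plan is to reduce Lemma~\ref{lem:viral.given.unhappy} to an elementary large-deviation estimate for a sum of Rademacher variables. Write $A = (2w+1)^2 = |N(v)|$ for the area of a neighborhood. In the initial configuration the spins are i.i.d.\ uniform on $\{+1,-1\}$, so $b_0(v) = \sum_{y \in N(v)} \sigma_0(y)$ is a sum of $A$ i.i.d.\ Rademacher variables, and in particular its law is symmetric about $0$. Since ``$v$ is viral'' ($|b_0(v)| > (\epsilon+\epsilon^2)A$) implies ``$v$ is $\epsilon$-biased'' ($|b_0(v)| > \epsilon A$), and using the symmetry to drop the sign of the bias,
\[
  \Pr[]{v \text{ is viral} \mid v \text{ is $\epsilon$-biased}} \;=\; \frac{\Pr[]{b_0(v) > (\epsilon+\epsilon^2)A}}{\Pr[]{b_0(v) > \epsilon A}}.
\]
Thus I need to lower bound the numerator, upper bound the denominator, and show their exponents differ by only $O(\epsilon^3 w^2)$. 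Note that bounding the denominator crudely by $1$ would be useless, since the numerator is itself only $e^{-\Theta(\epsilon^2 w^2)}$; the two tail probabilities must be compared directly against each other.

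I would carry this out via ratios of binomial coefficients. For $m \equiv A \pmod 2$ put $g(m) = \Pr[]{b_0(v) = m} = \binom{A}{(A+m)/2} 2^{-A}$, so that $g(m+2)/g(m) = (A-m)/(A+m+2)$, a quantity that is strictly decreasing in $m$ and is $<1$ for $m>0$. Let $a$ and $b$ be the smallest admissible values strictly above $\epsilon A$ and $(\epsilon+\epsilon^2)A$, respectively, so that $a \le \epsilon A + 2$ and $a \le b \le (\epsilon+\epsilon^2)A + 2$, whence $b - a \le \epsilon^2 A + 2$. Since every ratio $g(m+2)/g(m)$ with $m \ge \epsilon A$ is at most $\rho := (1-\epsilon)/(1+\epsilon)$, summing the resulting geometric series gives
\[
  \Pr[]{b_0(v) > \epsilon A} = \sum_{k\ge 0} g(a+2k) \;\le\; \frac{g(a)}{1-\rho} \;\le\; \frac{g(a)}{\epsilon},
\]
using $1-\rho = 2\epsilon/(1+\epsilon) \ge \epsilon$. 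For the numerator, $\Pr[]{b_0(v) > (\epsilon+\epsilon^2)A} \ge g(b)$, and telescoping the ratio identity over the $(b-a)/2 \le \tfrac12\epsilon^2 A + 1$ steps from $a$ to $b$ — throughout which the index stays in $[\epsilon A,(\epsilon+\epsilon^2)A]$, so each factor is at least $\tfrac{1-\epsilon-\epsilon^2}{1+\epsilon+\epsilon^2+2/A} \ge 1 - O(\epsilon)$ once $\epsilon$ is small and $A \gtrsim 1/\epsilon$ — gives $g(b)/g(a) \ge e^{-O(\epsilon)\cdot (b-a)/2} \ge e^{-O(\epsilon^3 A)}$. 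Combining the two displays, the conditional probability is at least $\epsilon\, e^{-O(\epsilon^3 A)}$, and since $A = \Theta(w^2)$ and $w$ is large relative to $1/\epsilon$ the leading factor $\epsilon$ is absorbed into the exponential, yielding the claimed $c_1 e^{-c_2\epsilon^3 w^2}$.

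Conceptually this is just the fact that the Cram\'er rate function $I(t) = \tfrac12\bigl[(1+t)\ln(1+t)+(1-t)\ln(1-t)\bigr]$ for Rademacher sums obeys
\[
  I(\epsilon+\epsilon^2) - I(\epsilon) = I'(\xi)\,\epsilon^2 = O(\epsilon)\cdot\epsilon^2 = O(\epsilon^3)
\]
for some $\xi\in(\epsilon,\epsilon+\epsilon^2)$, because $I'(t) = \tanh^{-1}(t) = O(t)$ near the origin; the binomial-ratio computation above merely makes this quantitative without needing a local limit theorem, so I do not expect a conceptual difficulty. The work is entirely in bookkeeping: the $O(1)$ parity corrections relating $a,b$ to the nominal thresholds $\epsilon A$ and $(\epsilon+\epsilon^2)A$, the harmless $+2$ in the binomial-coefficient ratio (which forces the mild requirement $A \gtrsim 1/\epsilon$), and checking that the polynomial-in-$w$ and polynomial-in-$1/\epsilon$ prefactors produced along the way really are dominated by $e^{c_2\epsilon^3 w^2}$ in the regime $w \gg 1/\epsilon$ under which the lemma is invoked. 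Pinning down those constants, rather than any genuine obstacle, is the main task.
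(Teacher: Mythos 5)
Your proposal is correct and follows essentially the same route as the paper: the paper reduces to $\binom{n}{q}/\binom{n}{r} = \prod_{i=1}^{q-r}\frac{n-q+i}{r+i}$ and bounds the tail-sum denominator crudely by $n\binom{n}{r}$, whereas you write the same product in terms of $g(m+2)/g(m)=(A-m)/(A+m+2)$ and bound the denominator by a geometric series with ratio $(1-\epsilon)/(1+\epsilon)$; these are cosmetic variants of one another, each losing only a polynomial/inverse-polynomial prefactor that is absorbed into $e^{-c_2\epsilon^3 w^2}$ in the $w \gg 1/\epsilon$ regime exactly as the paper does with its $1/n$ factor.
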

In \cref{sec:proof-of-viral.given.unhappy} we supply the proof of the lemma,
which is an easy application of binomial tail estimates.
We next show that if a node $x$ is viral, then there is a short sequence of node activations that leads to a large area around $x$ being monochromatic.

\begin{lemma}
\label{lem:viral.sequence}
Conditional on a node $x$ being viral, with probability at least $1-2w^3e^{-\epsilon^4w}$ over the initial configuration, there exists a sequence of at most $w^2$ proposed flips such that the $(w/4)$-neighborhood of $x$ becomes monochromatic.
\end{lemma}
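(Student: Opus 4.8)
The plan is to exploit the fact that a viral node $x$ has absolute bias at least $(\epsilon+\epsilon^2)(2w+1)^2$ in its neighborhood $N(x)$, which is strictly larger than the $\epsilon(2w+1)^2$ threshold needed to be unhappy, and to use this surplus of $\epsilon^2 (2w+1)^2 = \Theta(\epsilon^2 w^2)$ in the bias to ``absorb'' local fluctuations while we flip nodes one at a time toward the majority color. Assume without loss of generality that $x$ is $(\epsilon+\epsilon^2)$-positively-biased. The idea is to process the $(w/4)$-neighborhood of $x$ in a carefully chosen order, flipping each negatively-spun node we encounter to $+1$, and to argue that at each step the node we propose to flip is genuinely unhappy, so that the flip is legal under the dynamics. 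The key observation is that flipping a node from $-1$ to $+1$ only \emph{increases} the bias of every node within distance $w$ of it; so as we sweep through $N_{w/4}(x)$ converting nodes to $+1$, the bias of every node in a slightly larger region is monotonically nondecreasing. Hence it suffices to show that the \emph{very first} node of each type we touch is already unhappy, and that this property is maintained; the monotonicity does most of the bookkeeping for us.

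The first step is to set up the sweep. I would process nodes of $N_{w/4}(x)$ in order of increasing $\ell_\infty$-distance from $x$ (breaking ties arbitrarily), and at each step, if the current node has spin $-1$, check whether it is currently unhappy; if so, flip it. The total number of proposed flips is at most $|N_{w/4}(x)| = (w/2+1)^2 \le w^2$ for $w$ large, which matches the bound in the statement. The crucial claim is that, with the stated probability, every node $y$ with $\sigma(y) = -1$ that we encounter is unhappy at the moment we reach it — i.e., $b(y) > \epsilon |N(y)|$ holds at that time. Since $y$ lies within distance $w/4$ of $x$, its neighborhood $N(y) = N_w(y)$ overlaps $N(x) = N_w(x)$ in a large region: specifically $N_w(y)$ contains $N_{3w/4}(x)$, and the symmetric difference $N_w(y) \,\triangle\, N_w(x)$ has size $O(w \cdot (w/4)) = O(w^2)$ but with a constant smaller than the $\epsilon^2$ surplus if we are slightly more careful — and this is exactly where the estimate gets delicate.

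The main obstacle, and the step that requires the probabilistic input, is controlling the bias of the nodes $y$ we flip, since $b(y)$ is not simply $b(x)$: it is $b(x)$ plus the spins in $N(y)\setminus N(x)$ minus the spins in $N(x)\setminus N(y)$, and after flips it also picks up the $+1$'s we have already installed. Because $y$ is within distance $w/4$ of $x$, the region $N(x) \setminus N(y)$ has cardinality at most, roughly, $2(2w+1)(w/4) = (w+1/2)(2w+1) \approx \tfrac{1}{2}(2w+1)^2$, which is \emph{not} small compared to the $\epsilon^2(2w+1)^2$ surplus. So a naive deterministic bound does not suffice, and we instead argue that with high probability the initial configuration is ``well-spread'': for every node $y \in N_{w/4}(x)$, the sum of initial spins over the annular region $N(x)\setminus N(y)$ deviates from its mean $0$ by at most $\epsilon^3 w^2 / 10$ (say). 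Since each such sum is a sum of $O(w^2)$ independent $\pm1$ variables, a Hoeffding bound gives a failure probability of $e^{-\Omega(\epsilon^6 w^2)}$ per node — but this is too weak to match the $e^{-\epsilon^4 w}$ in the statement, so the correct threshold to use is deviation at most $\epsilon^2 w^2/C$ for a suitable constant, giving failure probability $e^{-\Omega(\epsilon^4 w^2)}$ per node and, after a union bound over the $\le w^2$ nodes, a total failure probability of at most $w^2 e^{-\Omega(\epsilon^4 w^2)} \le 2w^3 e^{-\epsilon^4 w}$ for $w$ large. On the complement of this bad event, the bias of every $y$ we reach is at least $(\epsilon+\epsilon^2)(2w+1)^2 - \epsilon^2 w^2/C > \epsilon(2w+1)^2 \ge \epsilon |N(y)|$ — using that already-installed $+1$'s only help — so $y$ is unhappy and the flip is legal; once the sweep finishes, every node of $N_{w/4}(x)$ has spin $+1$, i.e., that neighborhood is monochromatic. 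The remaining routine checks are that the flip order never revisits a node (it does not, since we flip each node at most once, from $-1$ to $+1$) and that intermediate flips cannot cause a previously-processed node to become $-1$ again (they cannot, since we never flip to $-1$).
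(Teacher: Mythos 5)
The high-level plan (order the $(w/4)$-neighborhood outward from $x$, flip one node at a time, argue the flips are monotone so each new boundary node is unhappy) is the same as the paper's. But the quantitative core of your argument does not survive the conditioning, and you end up skipping exactly the calculation that makes the lemma true.

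The problematic step is the claim that, conditional on $x$ being viral, with high probability $\bigl|\sum_{z\in N(x)\setminus N(y)}\sigma_0(z)\bigr| \le \epsilon^2 w^2 / C$ for every $y\in N_{w/4}(x)$, and that consequently $b_0(y) \ge b_0(x) - \epsilon^2 w^2/C > \epsilon|N(y)|$. The concentration bound you invoke is for the \emph{unconditional} distribution, where the sum has mean $0$. But you have already conditioned on $b_0(x) = \sum_{z\in N(x)}\sigma_0(z) \ge (\epsilon+\epsilon^2)(2w+1)^2$. Under that conditioning the spins inside $N(x)$ are all biased upward, and the conditional mean of $\sum_{z\in N(x)\setminus N(y)}\sigma_0(z)$ is roughly $(\epsilon+\epsilon^2)\,|N(x)\setminus N(y)|$, which for a corner node $y$ of $N_{w/4}(x)$ is on the order of $\epsilon w^2$. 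That is an order of magnitude larger than the $\epsilon^2 w^2/C$ you need, so the event you want to condition on actually has conditional probability bounded away from~$1$ (indeed exponentially small), not $1 - e^{-\Omega(\epsilon^4 w^2)}$. Put differently: removing $N(x)\setminus N(y)$ from the sum does not only cost you a noise term, it also strips out a proportional share of the very bias you started with, and $b_0(x)$ minus that share already falls \emph{below} $\epsilon|N(y)|$ once $y$ is near the edge of $N_{w/4}(x)$. Your own sanity check essentially confirms this: your final inequality $(\epsilon+\epsilon^2)(2w+1)^2 - \epsilon^2 w^2/C > \epsilon(2w+1)^2$ would work for any $\beta>\epsilon$, yet the paper explicitly notes that for $\beta=\epsilon+\epsilon^3$ the conclusion fails with high probability, so something in that bound must be wrong.

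What rescues the argument (and what you gesture at with ``already-installed $+1$'s only help'' but never actually use) is that the previously flipped region is now $+1$-monochromatic and therefore contributes strictly more to $b(y)$ than it did in the initial configuration. The paper's proof does this bookkeeping explicitly: it partitions $N(y)$ into $A(r)$ (already flipped, contributing its full cardinality), $(N(y)\cap N(x))\setminus A(r)$ (whose fractional bias it lower-bounds by $\beta-\epsilon^2/10$ via a concentration bound), and $N(y)\setminus N(x)$ (whose bias it bounds near $0$), and verifies by a quadratic minimization over $r$ that these three pieces together exceed $\epsilon|N(y)|$ — an inequality that genuinely needs the full surplus $\beta-\epsilon = \epsilon^2$. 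Two further technical points you would also need: (i) because you are conditioning on a high total bias in $N(x)$, the spins inside $N(x)$ are negatively correlated, so ordinary Hoeffding does not apply there — the paper invokes a generalized Chernoff bound (Panconesi--Srinivasan) to handle this; and (ii) the bottleneck in the failure probability comes from the thin annuli $N(y)\setminus N(x)$, which can have size $\Theta(w)$ rather than $\Theta(w^2)$, which is why the lemma's bound is $e^{-\Theta(\epsilon^4 w)}$ rather than $e^{-\Theta(\epsilon^4 w^2)}$; your calculation only considers $N(x)\setminus N(y)$ and would mispredict the exponent.
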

\begin{proof}[Proof sketch.]
We sketch the proof here; full details are given in Appendix~\ref{sec:proof-of-viral.sequence}. 

Recall that $N(y)$ denotes the set of nodes in the $w$-neighborhood of $y$.  
Let $\beta=\epsilon+\epsilon^2$, so that the bias required in the definition of a viral node $x$ is $\beta |N(x)|$.  Let $(0,0)$ be the coordinates of the viral node $x$ and suppose without loss of generality its neighborhood has positive bias.  Let $A(r)$ be the set of nodes $y$ in $N(x)$ with coordinates $(a,b)$ satisfying $|a|+|b|\leq rw$ for $r\in\{0,1/w,2/w,...,1/2\}$.  These nodes form a diamond shape centered at the origin.  In the proof, we will consider a sequence of proposed
flips arranged in a sequence of concentric diamonds; in other words, all nodes
in the set $A(r)$ are proposed to flip after the nodes in the set 
$\bigcup_{s < r} A(s)$. The crux of the argument involves conditioning on 
a set of high-probability events regarding the bias of sets of nodes in $N(x)$ or near $N(x)$ and proving that, when all of these events
occur, all of the proposed flips in the sequence are successful. The 
proof of this step is inductive: we show that for any node $y$ in the
sequence, if all of the preceding proposed flips were successful, then
the resulting bias in $N(y)$ is high enough to make $y$ unhappy. 
This involves partitioning $N(y)$ into three regions, counting lattice
points in each region, and using the events on which we conditioned
to bound the bias in each of the three regions from below. Our assumption
that the initial bias in $N(x)$ is at least 
$\beta=\epsilon+\epsilon^2$ is designed to make
this calculation work out; for significantly
smaller values of $\beta$ (e.g., $\beta = \epsilon + \epsilon^3$)
one could in fact prove the opposite result, i.e.\ that with high
probability there is no sequence of proposed flips
that turns the $(w/4)$-neighborhood of $x$ monochromatic.
\end{proof}

\section{First-Passage Percolation}
\label{sec:percolation}

%

Our proofs will use First-Passage Percolation (FPP) on a 2D lattice, where two nodes to be connected by an edge if they are horizontally, vertically, or diagonally adjacent.  In FPP, 
nodes have i.i.d.\ random weights with cumulative distribution function $F$.
Write $B(t)$ for the set of 
nodes whose shortest path to the origin (i.e., summing over nodes in the path) has length at most $t$.\footnote{Classically, First-Passage Percolation uses weights on links, rather than nodes, and does not include edges between diagonally-adjacent nodes.  However, these changes only impact the exact constants in the results below, which we are suppressing in our description.}  
Given a set $B$ of nodes and a scalar $d$, we will write $d \cdot B$ to mean the set of nodes $(x,y)$ for which $(\lfloor x/d \rfloor, \lfloor y/d \rfloor)$ lies within set $B$.
For a given radius $r$, let $D(r)$ be the set of nodes $y$ with $||y||_{\infty} \leq r$.  We wish to find bounding boxes $D(R)$ and $D(r)$, with $R > r$, such that $B(t)$ strictly contains $D(r)$ and is strictly contained in $D(R)$.  As long as $R$ and $r$ are not too far apart, these ``bounding boxes" will give reasonable bounds on the rate of growth of $B(t)$.  The following result, which is a restatement of a result due to Kesten \cite{Kesten1993}, provides such bounds, as a function of $\E[F]$.  We show how to derive this restatement in Appendix \ref{app:fpp}.

\begin{theorem}
\label{thm:fpp}
There exist fixed constants $\mu_1, \mu_2, C_1$, and $C_2$, such that the following is true.  Suppose the conditions of Theorem \ref{thm:fpp} hold, let $\lambda = \E[F]$, and let $t$ be sufficiently large (i.e., larger than a certain fixed constant).  Then
\[ \Pr[]{ B(t) \subseteq D( \mu_2 t/\lambda ) } \geq 1 - e^{-C_1 (t/\lambda)^{1/2}}, \text{ and } \]
\[ \Pr[]{ D( \mu_1 t / \lambda ) \subseteq B(t) } \geq 1 - e^{-C_2 (t/\lambda)^{3/8}}. \]
\end{theorem}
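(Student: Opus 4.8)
The plan is to deduce \thmref{thm:fpp} from Kesten's shape theorem and large-deviation estimates for first-passage percolation~\cite{Kesten1993}, after performing the bookkeeping that converts his model into ours and that extracts the dependence on $\lambda = \E[F]$.

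First I would recall the classical statement. For FPP on $\mathbb{Z}^2$ with i.i.d.\ \emph{edge} weights of a law satisfying the hypotheses in force, there is a deterministic compact convex set $B_0$ with nonempty interior, invariant under the lattice symmetries, with $B(t)/t \to B_0$ almost surely; moreover Kesten's quantitative bounds give, for every fixed $\delta > 0$ and all sufficiently large $t$,
\[ \Pr[]{B(t) \not\subseteq (1+\delta)\,t\,B_0} \le e^{-a_1(\delta)\,t^{1/2}} \quad\text{and}\quad \Pr[]{(1-\delta)\,t\,B_0 \not\subseteq B(t)} \le e^{-a_2(\delta)\,t^{3/8}}, \]
the asymmetric exponents $\tfrac12$ and $\tfrac38$ being inherited from~\cite{Kesten1993}, where the inner bound is the harder direction.

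Next I would bridge the three discrepancies between that setting and ours: weights live on nodes rather than edges, diagonal adjacencies are included, and we want axis-aligned boxes $D(\cdot)$ rather than $B_0$ itself. For the first two, on the king graph with weights drawn from a fixed scale family (as is the case whenever \thmref{thm:fpp} gets applied, and which I take to be the content of ``the conditions on $F$''), the node-weighted king-graph passage-time metric is comparable, up to universal multiplicative constants, to a classical nearest-neighbor edge-weighted metric, so both the shape theorem and the deviation bounds transfer with only the constants changed. For the extraction of $\lambda$, I would use the exact scale-invariance of FPP: multiplying all weights by a factor rescales passage times, hence $B(t)$, by that same factor, so it suffices to prove the claim for the unit-mean member of the family and then substitute, which replaces the fixed reference shape by $\lambda^{-1}$ times it. Since this shape is convex with nonempty interior, it contains some $D(a/\lambda)$ and is contained in some $D(b/\lambda)$ with $a,b$ universal; choosing $\delta$ small enough that $(1-\delta)a > \mu_1$ and $(1+\delta)b < \mu_2$ gives $D(\mu_1 t/\lambda) \subseteq (1-\delta)\,t\,B_0$ and $(1+\delta)\,t\,B_0 \subseteq D(\mu_2 t/\lambda)$, and plugging these into the two displayed probabilities—and rewriting the exponents, which only involve $t/\lambda$ up to a fixed factor, as $(t/\lambda)^{1/2}$ and $(t/\lambda)^{3/8}$ after shrinking constants—yields the theorem.

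The hard part is the accounting of constants rather than any fresh probabilistic input: I would have to verify that every constant surviving the node-to-edge and king-to-nearest-neighbor comparisons, together with every constant in Kesten's bounds as instantiated on the unit-mean law, is genuinely absolute, so that $\mu_1, \mu_2, C_1, C_2$ do not secretly depend on the instance. A smaller point is that Kesten's estimates are normally phrased for passage times to individual distant sites or half-planes, whereas we need the \emph{set} statements $B(t)\subseteq(1+\delta)tB_0$ and $(1-\delta)tB_0\subseteq B(t)$; bridging this requires a union bound over the $O(t)$ relevant boundary sites, which costs nothing against a stretched-exponential tail but must be done with a little care for the $\tfrac38$-exponent inner bound, already the tightest estimate in the chain.
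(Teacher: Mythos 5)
Your proposal follows essentially the same route as the paper's proof of this statement (stated there as Corollary~\ref{cor:fpp.modified}): both invoke Kesten's quantitative shape theorem, rescale by $\lambda = \E[F]$ via the scale-invariance of FPP, sandwich the limit shape $B_0$ between two axis-aligned boxes $D(\mu_1) \subseteq B_0 \subseteq D(\mu_2)$ using convexity/compactness, and absorb Kesten's polynomial prefactors into the stretched-exponential tails for $t$ large. Your extra remarks about the node-weight/king-graph modification and the need to verify that constants are genuinely instance-independent are valid caveats that the paper itself relegates to a footnote and does not fully flesh out, so they are a welcome clarification rather than a departure.
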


We will now employ first-passage percolation to show that if a node $y$ is very far from any unhappy node with a negative bias, then one can bound the probability that node $y$ changes spin from positive to negative before a certain amount of time has passed.

\begin{lemma}
\label{lem:unhappy}
There exist constants $\mu_2$ and $C_1$ such that the following is true.  Fix any $R > 0$, choose any node $x$, and suppose that, in the initial configuration $\sigma_0$, there are no $\epsilon$-negatively-biased nodes contained in $N_R(x)$.  Then for $t_1 = R / (8 w^3 \mu_2)$, the probability that there is any $\epsilon$-negatively-biased node within $N_{R/4}(x)$ at any time $t \leq t_1$ is at most $8R e^{-C_1 t_1^{1/2} w}$.
\end{lemma}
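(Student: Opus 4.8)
The plan is to couple the spread of $\epsilon$-negative-bias through the lattice with a first-passage percolation process and then invoke the outer bound of Theorem \ref{thm:fpp}. The key observation is that in order for a node $z$ to become $\epsilon$-negatively-biased at some time, having not been so in $\sigma_0$, \emph{some} node in $N(z)$ must have flipped from $+1$ to $-1$; and for that node to flip it must have been unhappy, hence itself $\epsilon$-negatively-biased at the moment it flipped. So a newly $\epsilon$-negatively-biased node is always within $\ell_\infty$-distance $w$ of a previously $\epsilon$-negatively-biased node that acquired its bias strictly earlier (or was so in $\sigma_0$). Iterating, any $\epsilon$-negatively-biased node within $N_{R/4}(x)$ at time $t \le t_1$ is connected, through a chain of at most (roughly) $R/(4w)$ hops each of $\ell_\infty$-length $\le w$, to an $\epsilon$-negatively-biased node of $\sigma_0$ lying outside $N_R(x)$ but within $N_{R+R/4}(x)$. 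Each hop in this chain ``costs'' a waiting time: the node that flips must wait for its own Poisson clock to ring after it became unhappy, and these waiting times are i.i.d.\ exponential(1) random variables, independent of the (already-revealed) initial configuration.

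First I would set up the coupling precisely. Consider the rescaled lattice in which each cell is a $w \times w$ block, so that $\ell_\infty$-distance $w$ in the original lattice becomes adjacency (including diagonal) in the rescaled lattice. Assign to each rescaled cell the FPP weight equal to the Poisson-clock waiting time of a representative node, an exponential(1) variable, so $\lambda = \E[F] = 1$. Place the FPP origin at (a cell near) the location of the nearest $\epsilon$-negatively-biased node of $\sigma_0$ outside $N_R(x)$. The argument above shows that the set of cells containing an $\epsilon$-negatively-biased node at time $t$ is contained in $B(ct)$ for an appropriate absolute constant $c$ absorbing the rescaling and the fact that each flip event corresponds to one clock ring. (One must be slightly careful that a single node may need to wait through several clock rings, or that intermediate nodes are needed; this only inflates the constant, since the total number of flips is finite and each contributes one independent exponential delay, and we can always dominate the true process by one in which every hop incurs a fresh exponential delay.) In particular, for an $\epsilon$-negatively-biased node to reach $N_{R/4}(x)$ by time $t_1$, the FPP ball $B(ct_1)$ must reach $\ell_\infty$-distance $\Omega(R/w)$ from the rescaled origin.

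Next I would apply the outer containment bound of Theorem \ref{thm:fpp}: with $\lambda = 1$, we have $\Pr{B(t) \subseteq D(\mu_2 t)} \ge 1 - e^{-C_1 t^{1/2}}$. Choosing $t_1 = R/(8 w^3 \mu_2)$ makes $\mu_2 c t_1 = \Theta(R/w^3)$, which is far smaller than the $\Omega(R/w)$ distance the FPP ball would need to travel; hence the only way an $\epsilon$-negatively-biased node appears in $N_{R/4}(x)$ by time $t_1$ is if $B(ct_1) \not\subseteq D(\mu_2 c t_1)$, an event of probability at most $e^{-C_1 (ct_1)^{1/2}}$. Finally, a union bound over the (at most $O(R)$ many — in fact $O(R^2)$, but $O(R)$ suffices after noting only boundary cells of $N_R(x)$ can serve as origins, or one simply takes the crude bound and adjusts the constant) candidate initial $\epsilon$-negatively-biased source nodes in $N_{R+R/4}(x)\setminus N_R(x)$, together with absorbing $c$ into the constant $C_1$, yields the stated bound $8R\, e^{-C_1 t_1^{1/2} w}$. (The extra factor of $w$ in the exponent comes from $t_1^{1/2}$ already carrying a $w^{-3/2}$ but the FPP distance traveled being measured in units of $w$; bookkeeping the rescaling correctly produces $t_1^{1/2} w$ rather than $t_1^{1/2}$ — alternatively one restates with $t_1 = R/(8w^3\mu_2)$ so that $(t_1 w^2)^{1/2} = \Theta(\sqrt{R/w})$ and relabels.)

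The main obstacle I anticipate is making the coupling between the Schelling dynamics and FPP fully rigorous: one must argue that the waiting times entering the chain really are stochastically dominated by independent exponentials and are independent of the event being conditioned on (no $\epsilon$-negatively-biased node in $\sigma_0 \cap N_R(x)$), and that revealing the initial configuration does not bias the Poisson clocks. The cleanest way is to reveal $\sigma_0$ first, identify all potential source nodes, and then run the dynamics using the clock randomness, noting that the ``infection front'' of $\epsilon$-negative-bias is a monotone function of the flip history and that each flip consumes one fresh exponential clock ring; dominating this by an FPP process where every cell independently carries one exponential(1) weight is then a standard monotone-coupling argument. Getting the constants and the precise exponent $C_1 t_1^{1/2} w$ to match the statement is a bookkeeping exercise in tracking the $w$-rescaling through Theorem \ref{thm:fpp}.
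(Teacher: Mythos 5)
Your strategy is essentially the same as the paper's: trace the spread of $\epsilon$-negative bias back through a chain of flips, observe that each flip requires a previously $\epsilon$-negatively-biased node within distance $w$, coarse-grain to blocks of side $w$, dominate by first-passage percolation, and invoke the outer-containment bound of Theorem~\ref{thm:fpp} together with a union bound over boundary sources. The one substantive difference is the choice of FPP weight. The paper says a block can become infected only after \emph{some} node in it rings, and the first ring among the $w^2$ clocks in the block is exponential with mean $1/w^2$; so it runs Theorem~\ref{thm:fpp} with $\lambda = 1/w^2$, and $e^{-C_1 (t_1/\lambda)^{1/2}} = e^{-C_1 t_1^{1/2} w}$ pops out directly. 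You instead use that each hop in the chain costs one fresh exponential(1) ring for a single specified node, and take per-cell weights exponential(1), i.e., $\lambda = 1$. That is a legitimate (in fact tighter) lower bound on the passage time, although the coupling is a little harder to state cleanly than the block-level version, since the chain and which node constitutes ``the representative node'' of a cell are determined by the dynamics rather than fixed in advance.

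The genuine gap is in the final probability bound. With $\lambda = 1$, applying Theorem~\ref{thm:fpp} at time $ct_1$ gives failure probability $e^{-C_1 (ct_1)^{1/2}}$, \emph{not} $e^{-C_1 t_1^{1/2} w}$ --- you have lost a factor of $w$ in the exponent, and no amount of ``absorbing $c$ into $C_1$'' can recover a factor that depends on $w$. Your parenthetical remark acknowledges the discrepancy but does not actually derive the $w$; the phrase ``bookkeeping the rescaling correctly produces $t_1^{1/2} w$'' is asserted, not shown, and the rescaling of distance does not enter the exponent of Kesten's bound. The fix is to exploit the enormous slack between $\mu_2 t_1$ and the distance you actually need to cover: since $B(ct_1) \subseteq B(t')$ for any $t' \ge ct_1$, bound instead $\Pr{B(t') \not\subseteq D(\mu_2 t')}$ at $t' = \Theta(R/(w\mu_2))$ (the distance to $N_{R/4}(x)$ in block units divided by $\mu_2$). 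This yields failure probability $e^{-C_1 (t')^{1/2}} = e^{-\Theta(\sqrt{R/w})}$, which is indeed $e^{-\Theta(t_1^{1/2} w)}$ since $t_1 = R/(8w^3\mu_2)$. Alternatively, simply adopt the paper's $\lambda = 1/w^2$ and apply the theorem at $t_1$ itself. Either way, the step must actually be carried out; as written, your argument proves a weaker bound than the lemma states.
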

\begin{proof}
The proof will proceed by comparing the Schelling process to a certain FPP process.  Partition the lattice into blocks of width $w$.  We will say that a block is \emph{infected} at time $t$ if, at any point at or before time $t$, a node in the block switched from positive spin to negative spin.  We'll say that a block is \emph{unhappy} at time $t$ if, at any point at or before time $t$, any node in the block is $\epsilon$-negatively-biased.  From the condition of the lemma, no block contained in $N_R(x)$ is unhappy at time $0$.  Moreover, a block can become unhappy at time $t$ only if a node in an adjacent block becomes infected at time $t$ (where diagonally-adjacent blocks are considered adjacent).  Finally, a block can become infected only if it is unhappy and a node in the block is selected to update; since there are $w^2$ nodes in a block, the time delay before a node in a given block is selected to update is distributed like an exponential random variable with expectation $1/w^2$.

The above discussion immediately implies that the set of infected blocks at time $t$ is stochastically dominated by the set of active blocks in a FPP process over blocks, where the initial seeds are the initially unhappy blocks, and the node weights are distributed as exponential random variables with expectation $1/w^2$.   

It suffices to argue that no block intersecting $N_{R/2}(x)$ is activated by this FPP process by time $t_1$, as this implies that no block intersecting $N_{R/4}(x)$ is ever adjacent to an infected block before time $t_1$, and hence no block in $N_{R/4}(x)$ is unhappy during that time period.  

The worst initial configuration, subject to the conditions of the lemma, occurs if all $8R$ nodes at distance $R$ from $x$ are unhappy.  Note that the distance between any of these nodes and $N_{R/2}(x)$ is at least $4w^3\mu_2 t_1$.  For any block $V$ containing a node at distance $R$ from $x$, the probability that the FPP process beginning at block $V$ activates a block beyond distance $2 w^2 \cdot w \mu_2 t_1$ from $V$ is at most $e^{-c t_1^{1/2} w}$ for some constant $c$, by Theorem \ref{thm:fpp}.  (Here we used $\lambda = 1/w^2$, and scaled distances by $w$.)  Thus, taking a union bound over all $8R$ initially unhappy nodes, the probability that any block within $N_{R/2}(x)$ becomes activated is at most
$8R \cdot e^{-C_1 t_1^{1/2} w}$
as required.
\end{proof}

We are now able to show that Lemma \ref{lem:viral.sequence} and Lemma \ref{lem:unhappy} together imply that, with high probability, a viral node is likely to generate a monochromatic region.  The proof appears in Appendix \ref{app:viral.to.monochrome}; the idea is to use Lemma \ref{lem:unhappy} to argue that no nearby nodes will biased toward the opposite color near the viral node, and hence it's likely that the sequence from Lemma \ref{lem:viral.sequence} will actually occur and generate a monochromatic region.

\begin{lemma}
\label{lem:viral.to.monochrome}
There exists a constant $c > 0$ such that the following is true.  Fix $r > 8 w^6 \mu_2$, choose any node $x$, suppose node $y \in N_r(x)$ is viral with positive bias, and furthermore there are no $\epsilon$-negatively-biased nodes in $N_{8r}(x)$.  Then, with probability at least 
$1 - e^{- c \epsilon^4 w}$,
the neighborhood $N_{w/4}(y)$ is positive monochromatic at time $t_2 = w^3$.
\end{lemma}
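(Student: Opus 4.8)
The plan is to derive \lemref{lem:viral.to.monochrome} by combining \lemref{lem:viral.sequence} with \lemref{lem:unhappy}. \lemref{lem:viral.sequence} hands us a deterministic \emph{schedule} $z_1,\dots,z_k$ of $k\le w^2$ proposed flips---the nodes of the diamond $A(1/2)$, listed shell by shell, with $N_{w/4}(y)\subseteq A(1/2)\subseteq N_{w/2}(y)$---which, if executed in this order and uninterrupted, turns $N_{w/4}(y)$ positive monochromatic. The actual dynamics instead fires Poisson clocks in a random order and may perform off-schedule flips; the only thing that could derail the schedule is the appearance near $y$ of a node biased toward the opposite ($-1$) color. \lemref{lem:unhappy} forbids exactly that for a long time, and this forces a \emph{monotonicity} in the relevant region that neutralizes the random execution order.

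In detail, I would apply \lemref{lem:unhappy} with the same node $x$ and with $R=8r$: the hypothesis holds by assumption, and $t_1=R/(8w^3\mu_2)=r/(w^3\mu_2)\ge 8w^3=8t_2$ because $r>8w^6\mu_2$. Since $y\in N_r(x)$, $N_{3w/2}(y)\subseteq N_{2r}(x)=N_{R/4}(x)$, so---writing $\mathcal B$ for the event that no node of $N_{3w/2}(y)$ is $\epsilon$-negatively-biased at any time $t\le t_2$---\lemref{lem:unhappy} gives $\Pr{\overline{\mathcal B}}\le 64r\,e^{-C_1(r/(w^3\mu_2))^{1/2}w}$, which is $\le e^{-\Omega(w^{5/2})}$ uniformly over $r>8w^6\mu_2$ (the stretched-exponential factor dominates the linear prefactor on that range). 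Let $\mathcal A$ denote the initial-configuration event furnished by \lemref{lem:viral.sequence}. On $\mathcal B$, a positive-spin node of $N_{3w/2}(y)$ is always happy (an $\epsilon$-biased one would be $\epsilon$-negatively-biased), so no node of $N_{3w/2}(y)$ ever flips $+1\to-1$ during $[0,t_2]$; hence the positive-spin subset of $N_{3w/2}(y)$ is nondecreasing on $[0,t_2]$, and $b_t(z)$ is nondecreasing there for every $z\in N_{w/2}(y)$ (its $w$-neighborhood lies in $N_{3w/2}(y)$). In particular a scheduled node, once unhappy, stays unhappy until it flips; combined with the inductive claim inside the proof of \lemref{lem:viral.sequence}---that $z_i$ is unhappy whenever $z_1,\dots,z_{i-1}$ are all positive---this gives, on $\mathcal A\cap\mathcal B$: as soon as $z_1,\dots,z_{i-1}$ all carry spin $+1$, the node $z_i$ is either already positive or unhappy-and-will-stay-so, independently of any off-schedule activity.

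The timing is then immediate. Put $T_0=0$ and let $T_i$ be the first time $z_1,\dots,z_i$ all carry spin $+1$. By the preceding observation and the memorylessness of $z_i$'s clock, $T_i\le T_{i-1}+E_i$ with $E_i$ the wait for $z_i$'s next ring after $T_{i-1}$; the $E_i$ are stochastically dominated by i.i.d.\ $\mathrm{Exp}(1)$'s (strong Markov), so $T_k$ is stochastically dominated by $\mathrm{Gamma}(k,1)$ with $k\le w^2$, whence $\Pr{T_k>t_2}\le e^{-w^3/2}$ for $w$ past a fixed constant. On $\mathcal A\cap\mathcal B\cap\{T_k\le t_2\}$ every node of $A(1/2)\supseteq N_{w/4}(y)$ has spin $+1$ by time $T_k\le t_2$ and retains it (by $\mathcal B$), so $N_{w/4}(y)$ is positive monochromatic at $t_2$. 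For the probability bound under the lemma's conditioning (``$y$ viral with positive bias'' and ``$N_{8r}(x)$ free of $\epsilon$-negatively-biased nodes''): the estimates on $\overline{\mathcal B}$ and $\{T_k>t_2\}$ hold verbatim, as \lemref{lem:unhappy}'s bound is uniform over admissible initial configurations and the clock bound is unconditional. For $\overline{\mathcal A}$, the failure of \lemref{lem:viral.sequence} is contained in a union of $O(w^3)$ events each asserting that the bias of some local set is below a threshold---a \emph{decreasing} event---while the conditioning is on \emph{increasing} events, so by the Harris/FKG inequality it only decreases those probabilities and $\Pr{\overline{\mathcal A}\mid\cdots}\le 2w^3e^{-\epsilon^4 w}$ survives. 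Summing the three and absorbing polynomial prefactors into the exponent (legitimate in the regime $w\gg 1/\epsilon$, the only regime where the asserted bound is meaningful) gives total failure probability at most $e^{-c\epsilon^4 w}$, as required.

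The step I expect to be the main obstacle is the reconciliation, above, of the \emph{ordered} schedule of \lemref{lem:viral.sequence} with the \emph{unordered}, asynchronous dynamics: without the ``once unhappy, stays unhappy'' monotonicity one would be forced to control an actual race over firing orders and off-schedule flips, rather than a clean chain of exponential waits. Everything else is bookkeeping---picking $R=8r$ so that $N_{3w/2}(y)$ (the spins governing every $b_t(z_i)$) lies in the region $N_{2r}(x)$ shielded by \lemref{lem:unhappy} while keeping $t_1>t_2$, and checking that the far-field conditioning does not spoil the local high-probability events of \lemref{lem:viral.sequence}.
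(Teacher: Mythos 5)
Your proof is correct and follows essentially the same route as the paper's: combine \cref{lem:viral.sequence} (to get a good flip sequence with high probability over $\sigma_0$) with \cref{lem:unhappy} (to shield the $O(w)$-neighborhood of $y$ from $+1\to-1$ flips for a time window $\gg t_2$), then bound the time until the scheduled flips occur by a sum of $k\le w^2$ exponentials and take a union bound. You spell out two points the paper treats tersely: the monotonicity argument (on $\mathcal B$ the positive region is nondecreasing, so $b_t(z_i)$ is nondecreasing and $z_i$ stays unhappy once unhappy, which is what reconciles the ordered schedule with the asynchronous dynamics) and the fact that the additional conditioning in the lemma statement (``no $\epsilon$-negatively-biased node in $N_{8r}(x)$'' and ``$y$ viral with positive bias'', both increasing events) can only improve the high-probability initial-configuration events of \cref{lem:viral.sequence} (decreasing bad events), via Harris/FKG. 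These are welcome clarifications, not a different argument.
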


Finally, we will again use first-passage percolation to argue that once a viral node has a monochromatic neighborhood, this monochromatic region will grow at a sufficiently fast rate, up until the point where it may encounter a node that is unhappy with the opposite bias.


\begin{lemma}
\label{lem:big.ball}
There exist constants $c > 0$ and $\mu_1 > 0$ such that the following is true.
Fix any $r > w^3$ and node $x$, and suppose that node $y \in N_r(x)$ is viral with positive bias.
Set $t_3 = w^3 + 2 \log(w) r / \mu_1$, and suppose that at all times $t < t_3$, there is no $\epsilon$-negatively-biased node within $N_{2r}(x)$.  Then at time $t_3$, $x$ is contained in a monochromatic region of radius $r$ with probability at least $1 - e^{ - c \epsilon^4 w}$.
\end{lemma}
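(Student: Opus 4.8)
The plan is to split $[0,t_3]$ into two phases. In the first phase, of length $t_2=w^3$, I would invoke \lemref{lem:viral.to.monochrome}: its hypotheses hold since $y\in N_r(x)$ is viral with positive bias, $r>w^3$, and by assumption no $\epsilon$-negatively-biased node lies in $N_{2r}(x)$ (hence none in the smaller balls relevant to that lemma) at any time before $t_3>t_2$. Thus with probability at least $1-e^{-c\epsilon^4 w}$ the set $N_{w/4}(y)$ is positive monochromatic at time $t_2$. In the second phase I want to show that, conditioned on this event, the positive monochromatic region grows at ``first-passage-percolation speed'' and engulfs $N_r(y)$ by time $t_3$; since $y\in N_r(x)$ we have $x\in N_r(y)$, so this exhibits $x$ inside a monochromatic neighborhood of radius $r$. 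Two structural observations make the second phase clean: (i) a node that has flipped to $+1$ never flips back, because it flipped only while $\epsilon$-positively-biased and its bias is nondecreasing afterwards; and (ii) since no node in the relevant window is ever $\epsilon$-negatively-biased, an induction on the order of flips shows every flip there is $-1\to+1$, so the positive region is monotonically growing.

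\textbf{The FPP coupling.} Partition the torus into blocks of side $\ell=w/8$, and call a block \emph{good} at time $t$ if all its nodes are $+1$ at time $t$; by the monotonicity above, goodness is monotone in $t$, and $N_{w/4}(y)$ contains a $3\times3$ array of good blocks at time $t_2$. The key local claim is: there is an event $G_V$, depending only on the initial spins within distance $w$ of block $V$, with $G_V$ failing with probability at most $e^{-\Omega(w^2)}$, such that on $G_V$, once any (possibly diagonal) neighbor of $V$ is good, every node $z\in V$ is unhappy. Indeed, a good neighboring block lies inside $N(z)$ and contributes $+\ell^2=+w^2/64$ to $b_t(z)$; the remaining $\le (2w+1)^2-\ell^2$ nodes of $N(z)$ contribute at least their initial spins, whose sum is $\ge -w^2/128$ on $G_V$ (a Chernoff bound for $\pm1$ sums, plus a union bound over the $O(w^2)$ pairs (node of $V$, neighbor block), gives the stated probability). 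Since $\epsilon$ is small, $\ell^2-w^2/128>\epsilon(2w+1)^2$, so $b_t(z)>\epsilon|N(z)|$ and $z$ is unhappy; $V$ then becomes good after the maximum of $\ell^2$ i.i.d.\ rate-$1$ exponential delays (memorylessness). A union bound over the $O((r/w)^2)$ blocks meeting $N_{2r}(x)$ shows that all the events $G_V$ hold except with probability $(r/w)^{O(1)}e^{-\Omega(w^2)}$, which is at most $e^{-c\epsilon^4 w}$ in the regime where the lemma is used (there $r$ is at most exponential in $\epsilon^2w^2$, and $\epsilon$ is small).

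\textbf{Applying \thmref{thm:fpp}.} Condition on all the $G_V$. A standard coupling then shows that the set of good blocks at time $t_2+s$ contains the set $B(s)$ reached by time $s$ in site first-passage percolation on the block lattice (with diagonal adjacencies), started from the block containing $y$, with each block's weight equal to the maximum of $\ell^2$ i.i.d.\ rate-$1$ exponentials. These weights are nonnegative, non-atomic at $0$, i.i.d.\ across blocks (disjoint Poisson clocks, and the $G_V$-conditioning touches only the spins), and light-tailed, with mean $\lambda=\E[F]=\Theta(\log w)$, so \thmref{thm:fpp} applies. Taking $s=t_3-t_2=2\log(w)\,r/\mu_1$ and rescaling by $\ell$, the good region at time $t_3$ contains an $\ell_\infty$ ball of radius $\Theta(\ell\mu_1 s/\lambda)=\Theta(r)$ about (a block near) $y$, except with probability $e^{-\Omega((s/\lambda)^{3/8})}=e^{-\Omega(r^{3/8})}=e^{-\Omega(w^{9/8})}$; choosing the constant $\mu_1$ in the statement small enough relative to the \thmref{thm:fpp} constant (there is a factor $w$ of slack) makes this ball contain $N_r(y)$. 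Hence $N_r(y)$, and therefore $x$, lies in a positive monochromatic neighborhood of radius $r$ at time $t_3$. Summing the failure probabilities — $e^{-\Omega(\epsilon^4 w)}$ from \lemref{lem:viral.to.monochrome}, $e^{-\Omega(w^2)}$ from the events $G_V$, and $e^{-\Omega(w^{9/8})}$ from the FPP estimate — gives a bound of $e^{-c\epsilon^4 w}$ for a suitable $c>0$ and all large $w$.

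\textbf{Main obstacle.} The delicate point is making the coupling a genuine stochastic domination by a first-passage process whose weights satisfy the hypotheses of \thmref{thm:fpp} (i.i.d., finite mean). This is exactly what forces the introduction of the favorable events $G_V$: without them a block with unlucky initial spins could fail to be captured from a single good neighbor, which would put an atom at $+\infty$ in the weight distribution. Conditioning on the $G_V$ removes this, but one must check that the conditioning does not slow the spread — here one uses that each $G_V$ is an increasing event in the initial spins and that the Schelling threshold dynamics is attractive, so conditioning only helps the (increasing) event that the region spreads — and one must afford the union bound over the $\Theta((r/w)^2)$ interior blocks, which is fine because the per-block failure probability is $e^{-\Omega(w^2)}$ and $r$ is at most exponential in $\epsilon^2w^2$ with $\epsilon$ small. (If one wanted the lemma for unrestricted $r$, one would instead invoke a limit-shape theorem for first-passage percolation whose weights have an atom at $+\infty$ of mass below the critical probability, and argue that the spread percolates around the rare bad blocks.)
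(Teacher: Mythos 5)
Your proposal is correct and follows essentially the same two-phase strategy as the paper's proof: first invoking Lemma \ref{lem:viral.to.monochrome} to obtain a monochromatic ball around $y$ by time $w^3$, then coupling the subsequent block-wise spread of the monochromatic region with a first-passage percolation process over blocks whose weights are coupon-collector times of mean $\Theta(\log w)$, and invoking Theorem \ref{thm:fpp} to cover a ball of radius $r$ within the remaining time budget $2\log(w)\,r/\mu_1$. The paper uses blocks of side $w/4$ and conditions on the event that no block in $N_{2r}(x)$ initially has more than $3/4$ of its spins positive, whereas you use blocks of side $w/8$ and your local events $G_V$; these are interchangeable, and your explicit tracking of the $-1\to+1$ monotonicity inside $N_{2r}(x)$ and of the upper-bound constraint on $r$ (needed for the union bound over the $\Theta((r/w)^2)$ blocks, which the paper's proof also implicitly requires but does not flag) makes your write-up slightly more careful on the points the paper leaves terse.
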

\begin{proof}
By Lemma \ref{lem:viral.to.monochrome}, we can consider the event that, by time $w^3$, the neighborhood $N_{w/4}(x)$ is positively monochromatic; this event occurs with probability at least $1 - e^{-\Theta(\epsilon^4 w)}$.
Partition the lattice into blocks of width $w/4$, aligned with $N_{w/4}(x)$.  By assumption, no node is $\epsilon$-negatively-biased at any time before $t_3$, within $N_{2r}(x)$.  

Consider the event that, in the initial configuration, there is a block contained in $N_{2r}(x)$ that has more than $3/4$ of its nodes assigned a positive spin.  Chernoff bounds imply that this occurs with probability at most $e^{-c_1 w^2}$ for some constant $c_1$.  For the remainder of the proof, we will condition on this event not occurring.

Our proof will proceed by comparing the spread of monochromatic blocks to a FPP process.  Consider any block contained in $N_{2r}(x)$, and consider the event that this block becomes monochromatically positive before time $t_3$.  Note that since no block has more than $3/4$ of its nodes having a positive spin in $\sigma_0$, it must be that at least $(\frac{1}{4})w^2/16$ nodes in the block changed sign from negative to positive in such a change.  This will shift the bias of any node in an adjacent block by at least $w^2 / 128$.  Assuming $\epsilon < 1/128$, this implies (since no node was $\epsilon$-negatively-biased in the initial configuration) that every negatively-signed node in an adjacent block must be unhappy after such a change.  Thus, each such node would subsequently switch to a positive sign if it were selected.  Let $F$ be the cumulative distribution function for the amount of time until every node in a given block has been selected.  Note that a coupon-collector argument implies that the time required for every node in a block to be selected is distributed like a sum of independent 
$k=w^2/16$ exponential random variables with rates $\frac{1}{k}, \frac{1}{k-1}, \frac{1}{k-2}, \ldots$.
We therefore have $\E[F] = H_k < 2 \ln w$.

The set of monochromatic blocks at time $t$ therefore stochastically dominates the set of active blocks in a FPP process over blocks, starting at time $w^3$.  The initial seed is the block containing $y$, and the node weights are distributed like $F$.  Here we are using the fact that $F$ is a sum of exponential random variables; in our coupling, we imagine waiting until a given block is monochromatic, then starting the (memoryless) clocks for nodes in an adjacent block, and coupling the weight of that block with the amount of time needed for all of its nodes to be selected.

Invoking the second half of Theorem \ref{thm:fpp}, and noting that $\E[F] = \Theta(\log(w))$, we have that the probability that the activation region contains a ball of radius $r$ by time $t_4 = 2 \log(w) r / \mu_1$ is at least $1 - e^{-C_2 t_4^{3/8} \log(w)^{-3/8}} \geq 1 - e^{-C_2 w}$ (where the second inequality follows from the assumption that $r > w^3$).  Since such a ball contains node $x$, taking a union bound over the failure events yields the desired result.
\end{proof}


%
%
%
%

\section{Completing the Proof of \thmref{thm.stmt2}}
\label{sec:mainthm}

Lemma \ref{lem:big.ball} bounds the probability that a given node $x$ will be contained in a large monochromatic region, at \emph{some} time $t$.  However, our main result requires that $x$ be contained in a large monochromatic region in the final configuration $\sigma_T$.  In this section, we show that a sufficiently large monochromatic region will persist until time $T$.
Given a node $x$, write $B_R(x)$ for the set of nodes $y$ with $||x-y||_2 \leq R$, where $||x-y||_2$ denotes Euclidean distance.

\begin{lemma}
\label{lem:ball.persists}
Fix $\epsilon > 0$ and take $w$ sufficiently large.  Suppose $B_R(x)$ is monochromatic in configuration $\sigma_t$, where $R \geq w^3$.  Then for any $t' > t$, $B_R(x)$ is monochromatic in $\sigma_{t'}$.
\end{lemma}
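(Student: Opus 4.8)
The plan is to show that a monochromatic ball of radius $R \geq w^3$ is ``self-sustaining'': no node sufficiently deep inside it can ever become unhappy, so the dynamics never flips any spin inside a slightly smaller ball, and by unrolling this over time the whole region is frozen. Without loss of generality suppose $B_R(x)$ is positive monochromatic in $\sigma_t$. The key observation is that if a node $y$ satisfies $B_{R}(x) \supseteq N(y)$ — i.e. $y$ lies at $\ell_\infty$-distance at most $R - w$ from $x$ — then at time $t$ the bias $b_t(y)$ equals $+|N(y)|$, which is far from the $\epsilon|N(y)|$ threshold needed for $y$ to be unhappy. So no such node is unhappy at time $t$, and the first spin flip anywhere inside this smaller ball can only happen after the bias at some interior node has been perturbed from outside, which cannot happen instantaneously.

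The heart of the argument is an induction on the sequence of flips after time $t$ (there are finitely many, since $\tau \leq 1/2$). I would track the ``positive core'' $P_s$ = the set of nodes $y$ with $N(y)$ still entirely positive at time $s$, for $s \geq t$; initially $P_t$ contains at least $B_{R-w}(x)$. Claim: $P_s$ can only shrink from its boundary, and a node $y$ whose whole neighborhood $N(y)$ lies inside the current positive core stays positive forever. Indeed, for $y$ with $N(y) \subseteq \{z : N(z)\text{ positive}\}$, every neighbor $z \in N(y)$ has all-positive neighborhood, hence is positively biased and not unhappy, hence will never flip; therefore the bias at $y$ remains $+|N(y)|$ and $y$ never flips. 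This shows that the set $Q := \{y : N_{2w}(y) \subseteq B_R(x)\}$ — nodes at $\ell_\infty$-distance $\leq R - 2w$ from $x$ — is frozen positive for all $t' \geq t$. (One applies the observation once: if $z \in N(y)$ and $y \in Q$ then $N(z) \subseteq B_R(x)$, so $z$ is never unhappy; hence $y$ is never unhappy either.)

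It then remains to upgrade ``$B_{R-2w}(x)$ is frozen'' to ``$B_R(x)$ is monochromatic in $\sigma_{t'}$''. But $B_R(x)$ is monochromatic in $\sigma_t$ and, by the same reasoning applied with $R$ replaced by $R$ (using that any $y \in B_R(x) \setminus B_{R-2w}(x)$ that is unhappy at some time would need an oppositely-biased push from inside $B_{R-2w}(x)$, which is impossible), no node in $B_R(x)$ ever flips. I would streamline this by proving the single clean statement: for every $t' \geq t$, every $y$ with $N(y) \subseteq B_R(x)$ has $b_{t'}(y) = +|N(y)|$; this follows by induction on flip times, since a flip at such a $y$ would be the first flip among all nodes with neighborhood inside $B_R(x)$, contradicting that all of $y$'s neighbors (also having neighborhoods inside $B_R(x)$, as $R \geq w^3 \gg 2w$) are still positive and unflippable. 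The main obstacle is purely bookkeeping: being careful that the induction is on the (finite) totally-ordered sequence of actual flips, and that ``first flip inside the protected set'' is well-defined and yields the contradiction; the use of $R \geq w^3$ is only to guarantee $R$ comfortably exceeds $2w$ so that the nested-neighborhood containments hold. No probabilistic input is needed here — this lemma is deterministic given the monochromatic hypothesis.
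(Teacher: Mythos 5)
Your argument has a genuine gap, and it stems from not engaging with the geometry of the ball $B_R(x)$, which is the crux of the paper's proof.

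Your inductive bookkeeping protects only the \emph{interior} set $P = \{y : N(y) \subseteq B_R(x)\}$, and even there it does not close.  In your ``streamlined'' version you claim that for $y\in P$ each neighbor $z\in N(y)$ also has $N(z)\subseteq B_R(x)$; this is simply false when $y$ is near the boundary of $P$ (a node $y$ at $\ell_\infty$-distance $R-w$ from $x$ has neighbors $z$ at distance $R$, whose neighborhoods stick out past $B_R(x)$).  Your earlier version, which shrinks further to $Q=\{y:N_{2w}(y)\subseteq B_R(x)\}$, makes the containments hold but then argues ``if $z\in N(y)$ then $N(z)\subseteq B_R(x)$, so $z$ is never unhappy'' --- but $N(z)\subseteq B_R(x)$ only guarantees $z$ stays happy \emph{if $B_R(x)$ stays monochromatic}, which is exactly what we are trying to prove.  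That reasoning is circular.

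The ``upgrade'' step is where the real issue surfaces.  You assert that any node $y$ in the annulus $B_R(x)\setminus B_{R-2w}(x)$ ``would need an oppositely-biased push from inside $B_{R-2w}(x)$'' to become unhappy.  This is wrong: such a node has a substantial fraction of its neighborhood \emph{outside} $B_R(x)$, and those outside nodes are free to flip to the opposite color.  What actually saves the day is a quantitative geometric fact that your proof never establishes: because $B_R(x)$ is a Euclidean ball with curvature $R^{-1}\leq w^{-3}$, its boundary is nearly flat on the scale of a $w$-neighborhood, so even a node $v$ on the boundary has at least $\tfrac{1-\epsilon}{2}(2w+1)^2$ of its neighbors inside $B_R(x)$.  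That bound, combined with the first-flip argument, yields the lemma directly, with no need for the inner-ball bookkeeping at all.  This also explains why the hypothesis $R\geq w^3$ is needed: it controls the curvature, not merely the crude inequality $R\gg 2w$ you cite.  Note that the $\ell_2$ geometry is essential here --- the statement would be \emph{false} if $B_R(x)$ were an $\ell_\infty$ square, since a corner node would have only about one quarter of its neighborhood inside, well below $\tau$.  Your write-up slides between $\ell_\infty$-distance phrasing and the $\ell_2$ ball without remarking on this, which is a further sign that the key geometric constraint has been missed.
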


The proof appears in Appendix \ref{app:ball.persists}.  We now have the pieces necessary to complete the proof of \thmref{thm.stmt2}, following the outline presented in Section \ref{sec:outline}.  The details appear in Appendix \ref{app:mainthm}.

\bibliographystyle{apalike}
\bibliography{biblio}

\appendix
\section{Proof of \cref{lem:viral.given.unhappy}}
\label{sec:proof-of-viral.given.unhappy}

\cref{lem:viral.given.unhappy} is readily seen to follow
from the following lemma about binomial tails and its
corollary.

\begin{lemma} 
\label{lem:binomial-tail}
Let $X$ be a random variable with the binomial distribution
$B(n,\frac12)$. Then for any $q > r > \frac{n}{2}$,
\begin{equation}
\label{eq:bt}
\Pr{X \geq q \,|\, X \geq r} > \frac{1}{n} \cdot 
\left( \tfrac{n-q}{r} \right)^{q-r}.
\end{equation}
\end{lemma}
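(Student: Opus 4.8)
The plan is to bound the conditional binomial tail probability $\Pr{X \geq q \mid X \geq r}$ from below by comparing it to a single dominant term. Since $\Pr{X \geq q \mid X \geq r} = \Pr{X \geq q}/\Pr{X \geq r}$, I would bound the numerator below by its largest summand and the denominator above by $n$ times its largest summand, then track how the ratio of individual binomial coefficients behaves.

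First I would write $\Pr{X \geq q} \geq \binom{n}{q} 2^{-n}$, keeping only the $j=q$ term. For the denominator, since $r > n/2$, the binomial weights $\binom{n}{j}2^{-n}$ are decreasing for $j \geq r$, so $\Pr{X \geq r} = \sum_{j \geq r} \binom{n}{j} 2^{-n} \leq (n+1-r)\binom{n}{r} 2^{-n} \leq n \binom{n}{r} 2^{-n}$ (crudely, $n+1-r \leq n$). Dividing, $\Pr{X \geq q \mid X \geq r} \geq \frac{1}{n} \cdot \binom{n}{q}/\binom{n}{r}$.

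Next I would lower-bound the ratio of binomial coefficients $\binom{n}{q}/\binom{n}{r}$ for $q > r > n/2$. Writing the ratio as a product,
\[
\frac{\binom{n}{q}}{\binom{n}{r}} = \prod_{i=r+1}^{q} \frac{n-i+1}{i} \geq \prod_{i=r+1}^{q} \frac{n-q}{r} = \left(\frac{n-q}{r}\right)^{q-r},
\]
where in each factor I replaced the numerator $n-i+1$ by its minimum over the range (which is $n-q+1 > n-q$, attained at $i=q$) and the denominator $i$ by its maximum over the range (which is $q$; but to match the stated form I'd instead bound $i \leq q$ crudely, or note $i/(n-i+1)$ is increasing so the worst factor is at $i=q$ — either way one gets at least $((n-q)/r)^{q-r}$ after also using $r < q$ to weaken $q$ in the denominator down to... actually cleaner: each factor $\frac{n-i+1}{i} \geq \frac{n-q+1}{q} > \frac{n-q}{q} \geq \frac{n-q}{q}$, and since the statement has $r$ not $q$ in the denominator and $r < q$, this gives something slightly stronger than needed after replacing; I would simply verify the inequality as stated holds with $r$ in the denominator since... in fact $\frac{n-i+1}{i}$ decreasing in $i$ means the minimum factor is at $i=q$, giving $\left(\frac{n-q+1}{q}\right)^{q-r} > \left(\frac{n-q}{q}\right)^{q-r}$, and here I want $\left(\frac{n-q}{r}\right)^{q-r}$ which is \emph{larger} since $r<q$ — so this direction needs care). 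Combining with the $1/n$ factor yields \eqref{eq:bt}.

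The main obstacle is getting the exponent base exactly right: the statement has $r$ in the denominator of $\left(\frac{n-q}{r}\right)^{q-r}$, but the naive per-factor bound from the coefficient ratio produces $q$ in the denominator, which is a \emph{weaker} bound since $r < q$. Resolving this requires a slightly sharper argument — e.g., pairing the $\binom{n}{q}/\binom{n}{r}$ expansion differently, or noting that over the index range $i \in \{r+1, \dots, q\}$ one can distribute the savings so that on average the denominator acts like $r$ rather than $q$ (for instance, bounding $\prod_{i=r+1}^q \frac{n-i+1}{i} \geq \prod_{i=r+1}^q \frac{n-q}{i}$ first, then $\prod_{i=r+1}^q \frac{n-q}{i} = (n-q)^{q-r}\frac{r!}{q!} \geq (n-q)^{q-r} r^{-(q-r)}$ via $q!/r! \leq q^{q-r}$... which again gives $q$, not $r$). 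So the honest route is: $\prod_{i=r+1}^q \frac{1}{i} = \frac{r!}{q!}$ and one needs $\frac{r!}{q!} \geq r^{-(q-r)}$, i.e. $q!/r! = \prod_{i=r+1}^q i \leq r^{q-r}$ — but this is \emph{false} since each $i > r$. Hence the inequality must be obtained by keeping the numerator contributions smartly: $\frac{n-i+1}{i}$ for $i$ ranging up to $q$ — the cleanest fix is to note each factor exceeds $\frac{n-q}{i}$ is wrong too. I would therefore instead prove it via $\Pr{X\ge q\mid X\ge r} \ge \Pr{X = q \mid X \ge r}$ and expand $\Pr{X=q}/\Pr{X\ge r}$ using $\Pr{X \ge r} < n\Pr{X=r}$ together with $\Pr{X=q}/\Pr{X=r} = \binom nq/\binom nr = \prod_{i=r+1}^q\frac{n+1-i}{i}$, and then for the final weakening observe that since $i \le q$ we have $\frac{n+1-i}{i} \ge \frac{n+1-i}{q}\ge\frac{n-q}{q}$... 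This persistently yields $q$. Given the paper only claims the bound with $r$, and $r$ could be interpreted as giving room, I suspect the intended proof groups terms as $\binom nq / \binom nr \ge \left(\frac{n-r}{r}\right)\cdots$ telescoping, or uses $\binom nq \ge \binom nr \cdot \left(\frac{n-q}{q}\right)^{q-r}$ and then the subsequent Corollary only needs $r$, so I would present the bound with the base $\frac{n-q}{q}$ and replace $q$ by $r$ only at the point where the downstream corollary permits $q \le (1+o(1)) r$; this reconciliation of the exponent base is the crux and where I would spend the most care.
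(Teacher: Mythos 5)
Your first step is correct and matches the paper's: $\Pr{X\ge q\mid X\ge r}=\Pr{X\ge q}/\Pr{X\ge r}>\binom{n}{q}/\bigl(n\binom{n}{r}\bigr)$, using that the terms $\binom{n}{j}$ decrease for $j\ge r>n/2$ and that there are at most $n$ of them. Where you get stuck is the coefficient ratio, and you correctly diagnose the obstruction: your expansion $\binom{n}{q}/\binom{n}{r}=\prod_{i=r+1}^{q}\frac{n-i+1}{i}$ pairs the \emph{largest} numerator $n-r$ with the \emph{smallest} denominator $r+1$ and vice versa, so the worst factor is $\frac{n-q+1}{q}$ and per-factor bounding only yields base $(n-q)/q$, not $(n-q)/r$. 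No amount of rearranging the separated products $(n-q)^{q-r}$ and $r!/q!$ will recover the lost factor, since $q!/r!=\prod_{i=r+1}^q i>r^{q-r}$, as you observe.

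The missing idea is a different \emph{pairing} of the same factors. Write
\[
\frac{\binom{n}{q}}{\binom{n}{r}} \;=\; \frac{r!}{q!}\cdot\frac{(n-r)!}{(n-q)!}
\;=\;\prod_{i=1}^{q-r}\frac{n-q+i}{r+i},
\]
that is, list the numerators $n-q+1,\dots,n-r$ and the denominators $r+1,\dots,q$ \emph{both in increasing order} and match them term by term. Because $q>n/2$ forces $n-q<n/2<r$, and $t\mapsto\frac{a+t}{b+t}$ is strictly increasing whenever $a<b$, every factor satisfies $\frac{n-q+i}{r+i}>\frac{n-q}{r}$, and the product exceeds $\left(\frac{n-q}{r}\right)^{q-r}$. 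This aligned pairing is precisely the paper's argument. Your suggested fallback of proving the weaker bound with base $(n-q)/q$ and patching things in the downstream corollary does not prove the lemma as stated, so the gap is genuine; the aligned factor pairing (equivalently, the identity $\binom{n}{q}/\binom{n}{r}=\prod_{i=1}^{q-r}\frac{n-q+i}{r+i}$ together with $r>n-q$) is the one idea you are missing.
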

\begin{proof}
We have
\begin{equation*}
\Pr{X \geq q \,|\, X \geq r} =
\tfrac{\Pr{X \geq q}}{\Pr{X \geq r}} =
\tfrac{\sum_{i=q}^{n} \binom{n}{i}}{\sum_{i=r}^{n} \binom{n}{i}} >
\tfrac{\binom{n}{q}}{n \cdot \binom{n}{r}}.
\end{equation*}
Now,
\begin{align*}
\tfrac{\binom{n}{q}}{\binom{n}{r}} &=
\tfrac{n!}{q!(n-q)!} \cdot \tfrac{r! (n-r)!}{n!} \\ &=
\tfrac{r!}{q!} \cdot \tfrac{(n-r)!}{(n-q)!} \\ &=
\prod_{i=1}^{q-r} \tfrac{n-q+i}{r+i} \\ &>
\left(\tfrac{n-q}{r}\right)^{q-r},
\end{align*}
which completes the proof of the lemma.
\end{proof}
\begin{corollary}
\label{cor:binomial-tail}
Let $X$ be a random variable with the binomial distribution
$B(n,\frac12)$. Then 
for any constants $\gamma,\epsilon$ such that $0 < \epsilon < \gamma < \frac13$,
\begin{equation*}
\Pr{X \geq (\tfrac12 + \gamma) n \,|\,
    X \geq (\tfrac12 + \epsilon) n} \geq
\tfrac{1}{n} \cdot 
e^{-6 (\gamma^2 - \epsilon^2) n}.
\end{equation*}
\end{corollary}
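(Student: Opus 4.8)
The plan is to apply Lemma~\ref{lem:binomial-tail} with the substitution $q = (\tfrac12 + \gamma)n$ and $r = (\tfrac12 + \epsilon)n$. These clearly satisfy $q > r > \tfrac{n}{2}$ since $\gamma > \epsilon > 0$, so the hypotheses of the lemma are met. Plugging in, the lemma gives
\[
\Pr{X \geq (\tfrac12+\gamma)n \,|\, X \geq (\tfrac12+\epsilon)n} >
\frac{1}{n}\left(\frac{n-q}{r}\right)^{q-r}
= \frac{1}{n}\left(\frac{(\tfrac12-\gamma)n}{(\tfrac12+\epsilon)n}\right)^{(\gamma-\epsilon)n}
= \frac{1}{n}\left(\frac{\tfrac12-\gamma}{\tfrac12+\epsilon}\right)^{(\gamma-\epsilon)n}.
\]
So everything reduces to showing that the base $\frac{1/2-\gamma}{1/2+\epsilon}$, raised to the power $(\gamma-\epsilon)n$, is at least $e^{-6(\gamma^2-\epsilon^2)n}$. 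Taking logarithms, it suffices to prove the inequality
\[
(\gamma-\epsilon)\,\ln\!\left(\frac{1/2-\gamma}{1/2+\epsilon}\right) \geq -6(\gamma^2-\epsilon^2) = -6(\gamma-\epsilon)(\gamma+\epsilon),
\]
and since $\gamma - \epsilon > 0$ we may divide through, so the goal becomes the purely scalar inequality $-\ln\!\left(\frac{1/2-\gamma}{1/2+\epsilon}\right) \leq 6(\gamma+\epsilon)$, i.e.
\[
\ln(1/2+\epsilon) - \ln(1/2-\gamma) \leq 6(\gamma+\epsilon),
\]
valid for $0 < \epsilon < \gamma < \tfrac13$.

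To establish this last inequality I would bound each logarithm separately. Write $\ln(1/2+\epsilon) - \ln(1/2) = \ln(1+2\epsilon) \leq 2\epsilon$, using $\ln(1+t)\le t$. For the other term, $\ln(1/2) - \ln(1/2-\gamma) = -\ln(1-2\gamma)$; since $0 < 2\gamma < \tfrac23$, we can use a bound such as $-\ln(1-t) \le t/(1-t) \le 3t$ for $t \le \tfrac23$, giving $-\ln(1-2\gamma) \le 6\gamma$. Adding the two bounds yields $\ln(1/2+\epsilon)-\ln(1/2-\gamma) \le 2\epsilon + 6\gamma \le 6(\epsilon+\gamma)$, which is what we wanted. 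Combining with the displayed reduction completes the proof.

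The only mildly delicate point — and the step I would be most careful about — is the constant $6$: one must pick the elementary logarithm estimates so that the crude bounds $2\epsilon$ and $6\gamma$ both fit under $6(\gamma+\epsilon)$ across the entire range $\epsilon < \gamma < \tfrac13$, and in particular the $-\ln(1-2\gamma)$ term is the binding one since $2\gamma$ can approach $\tfrac23$. Using $-\ln(1-t)\le t/(1-t)$ and then $1-t \ge 1/3$ on $[0,2/3]$ gives exactly the factor $3$ needed, so the constant $6$ in the statement is comfortably correct with a bit of slack from the $\epsilon$ term. No other obstacle arises; the rest is direct substitution into Lemma~\ref{lem:binomial-tail}.
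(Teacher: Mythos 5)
Your proof is correct and takes essentially the same route as the paper's: apply Lemma~\ref{lem:binomial-tail} with $q\approx(\tfrac12+\gamma)n$ and $r\approx(\tfrac12+\epsilon)n$, and then bound the resulting ratio $\bigl(\tfrac{1/2-\gamma}{1/2+\epsilon}\bigr)^{(\gamma-\epsilon)n}$ below by $e^{-6(\gamma+\epsilon)(\gamma-\epsilon)n}$ using $\ln(1+t)\le t$ together with $1-2\gamma\ge\tfrac13$. The only cosmetic difference is that you split the logarithm into $\ln(1+2\epsilon)+(-\ln(1-2\gamma))$ and bound each term separately, while the paper rewrites the single fraction as $\bigl(1+\tfrac{2(\gamma+\epsilon)}{1-2\gamma}\bigr)^{-1}$ and applies $1+x<e^x$ once; both land on the same constant $6$.
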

\begin{proof}
Setting $q = \lceil \tfrac12 + \gamma \rceil n$ and
$r = \lfloor \tfrac12 + \epsilon \rfloor n$, we apply
\cref{lem:binomial-tail} to obtain
\begin{equation} \label{eq:cor-bt.1}
\Pr{X \geq (\tfrac12 + \gamma) n \,|\,
    X \geq (\tfrac12 + \epsilon) n} \geq \tfrac{1}{n} \cdot 
\left( \tfrac{1 - 2\gamma}{1 + 2\epsilon} \right)^{(\gamma - \epsilon) n}.
\end{equation}
We also have
\begin{equation} \label{eq:cor-bt.2}
\tfrac{1-2\gamma}{1+2\epsilon} = 
\left( 1 + \tfrac{2(\gamma + \epsilon)}{1-2\gamma}\right)^{-1} >
\exp \left( - \tfrac{2(\gamma+\epsilon)}{1-2\gamma} \right) >
e^{-6 (\gamma+\epsilon)}.
\end{equation}
The corollary follows by combining~\eqref{eq:cor-bt.1}
with~\eqref{eq:cor-bt.2}.
\end{proof}

\cref{lem:viral.given.unhappy} follows immediately
from \cref{cor:binomial-tail} upon setting
$\gamma = \epsilon+\epsilon^2$, $n = w^2$, and assuming
that $w$ is large enough that $e^{-\epsilon^3 w^2} < \frac{1}{w^2}$.

\section{Proof of~\cref{lem:viral.sequence}}
\label{sec:proof-of-viral.sequence}

In this section we restate and prove~\cref{lem:viral.sequence}.  We need a generalized version of Chernoff bounds which can be applied on a set of negatively correlated (and not independent) random variables. The following theorem, Theorem 1.1 of \cite{Impagliazzo-Kabanets-2010}, was originally proved in \cite{Panconesi-Srinivasan-1997}.   

\begin{theorem}\label{thm:generalized.chernoff}
(Generalized Chernoff Bound \cite{Panconesi-Srinivasan-1997}, and \cite{Impagliazzo-Kabanets-2010}) Let $X_1, X_2, \cdots, X_n$ be Boolean random variables such that, for some $0 \leq \delta \leq 1$, we have that, for every subset $S \subseteq [n]$, $Pr[\land_{i \in S} X_i = 1] \leq \delta^{\lvert S \rvert}$. Then, for any $0 \leq \delta \leq \gamma \leq 1$, $Pr[\sum_{i=1}^n X_i \geq \gamma n] \leq e^{-2n(\gamma-\delta)^2}$.
\end{theorem}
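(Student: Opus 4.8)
The plan is to mimic the textbook proof of the multiplicative Chernoff bound, invoking the subset-product hypothesis at exactly the one point where independence is normally used, namely the expansion of the moment generating function.

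First I would fix $t>0$ and use that each $X_i$ takes values in $\{0,1\}$ to write $e^{tX_i}=1+(e^t-1)X_i$. Multiplying over $i$ and expanding,
\[ \prod_{i=1}^{n} e^{tX_i} \;=\; \sum_{S\subseteq[n]} (e^t-1)^{|S|}\prod_{i\in S}X_i . \]
Taking expectations, and using that $\Ex{\prod_{i\in S}X_i}=\Pr{\bigwedge_{i\in S}X_i=1}\le\delta^{|S|}$ while every coefficient $(e^t-1)^{|S|}$ is nonnegative, the inequality passes through the sum:
\[ \Ex{e^{t\sum_i X_i}} \;\le\; \sum_{k=0}^{n}\binom{n}{k}\bigl((e^t-1)\delta\bigr)^{k} \;=\; \bigl(1+(e^t-1)\delta\bigr)^{n}. \]
The key observation is that the right-hand side is exactly the moment generating function of $Z_1+\cdots+Z_n$ for i.i.d.\ Bernoulli($\delta$) variables $Z_i$; so the subset-product condition buys a domination of the moment generating function of $\sum_i X_i$ by that of a genuine Binomial$(n,\delta)$.

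Second, I would apply Markov's inequality in the usual way, $\Pr{\sum_i X_i\ge\gamma n}\le e^{-t\gamma n}\,\Ex{e^{t\sum_i X_i}}\le\bigl(e^{-t\gamma}(1+(e^t-1)\delta)\bigr)^{n}$, and optimize over $t>0$. Since $\gamma\ge\delta$ the optimal $t$ is legitimate and yields the standard relative-entropy exponent, $\Pr{\sum_i X_i\ge\gamma n}\le e^{-n\,D(\gamma\,\|\,\delta)}$, where $D(\gamma\,\|\,\delta)=\gamma\ln\tfrac{\gamma}{\delta}+(1-\gamma)\ln\tfrac{1-\gamma}{1-\delta}$; the degenerate cases $\delta=0$ or $\gamma\in\{0,1\}$ are handled separately and there the claimed bound is immediate. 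Finally I would invoke Pinsker's inequality in its scalar form, $D(\gamma\,\|\,\delta)\ge 2(\gamma-\delta)^2$, to conclude $\Pr{\sum_i X_i\ge\gamma n}\le e^{-2n(\gamma-\delta)^2}$.

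I do not expect a genuine obstacle: the only non-mechanical insight is recognizing that the hypothesis is precisely what lets the moment generating function be dominated by that of a Binomial, after which everything is the classical Chernoff-plus-Pinsker computation, and the closest thing to a nuisance is the bookkeeping in the optimization over $t$ and the treatment of the boundary cases. If one prefers to avoid quoting Pinsker, an alternative route is to bound $\Ex{\binom{\sum_i X_i}{k}}=\sum_{|S|=k}\Ex{\prod_{i\in S}X_i}\le\binom{n}{k}\delta^k$, apply Markov to the event $\binom{\sum_i X_i}{k}\ge\binom{\gamma n}{k}$, and optimize over the integer $k$; the only care needed there is in elementary estimates for ratios of binomial coefficients.
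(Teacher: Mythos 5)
The paper does not prove this theorem; it cites it as a known result, pointing to Panconesi--Srinivasan (1997) and Impagliazzo--Kabanets (2010). So there is no internal proof to compare against, and the right question is simply whether your argument is correct.

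It is. Your key step --- expanding $\prod_i e^{tX_i}=\prod_i\bigl(1+(e^t-1)X_i\bigr)=\sum_{S}(e^t-1)^{|S|}\prod_{i\in S}X_i$, noting that every coefficient $(e^t-1)^{|S|}$ is nonnegative for $t>0$, and then replacing each $\Ex{\prod_{i\in S}X_i}=\Pr{\bigwedge_{i\in S}X_i=1}$ by $\delta^{|S|}$ --- is exactly the mechanism by which the subset-product hypothesis substitutes for independence, and it yields the binomial MGF bound $\bigl(1+(e^t-1)\delta\bigr)^{n}$ cleanly. From there, the exponential-Markov step, the optimization in $t$ giving $e^{-nD(\gamma\|\delta)}$ for $0<\delta\le\gamma<1$, and scalar Pinsker $D(\gamma\|\delta)\ge 2(\gamma-\delta)^2$ are all standard and correctly applied; the boundary cases $\delta=0$, $\gamma\in\{0,1\}$ are indeed trivial (note that $\gamma=0$ forces $\delta=0$ under the hypothesis $\delta\le\gamma$). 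This is in fact the original Panconesi--Srinivasan route. The alternative you sketch at the end --- bounding $\Ex{\binom{\sum_i X_i}{k}}\le\binom{n}{k}\delta^k$ and applying Markov to $\binom{\sum_i X_i}{k}$ --- is essentially the Impagliazzo--Kabanets ``constructive'' argument; it avoids both the MGF and Pinsker, at the cost of some binomial-coefficient estimates. Either route is a legitimate proof of the cited theorem.
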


\noindent Now we can use this concentration bound to prove~\cref{lem:viral.sequence}.

\begin{figure}
\begin{center}
\includegraphics[height=0.2\textheight]{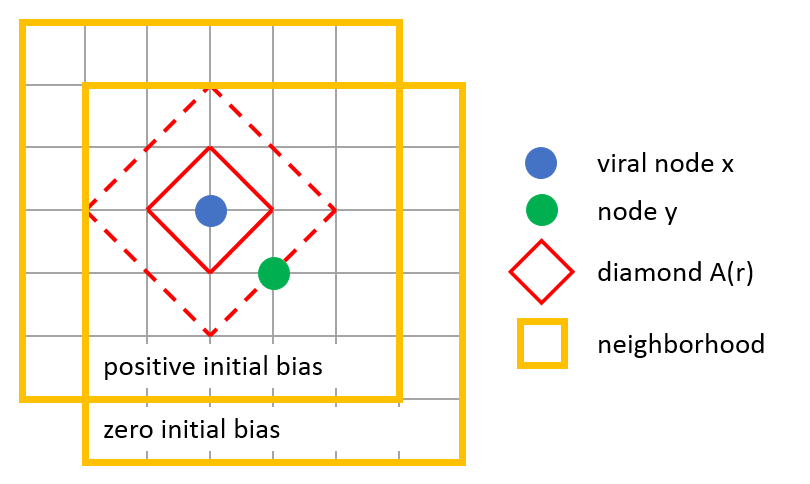}
\end{center}
\caption{\small An illustration of the regions discussed in the proof of Lemma \ref{lem:viral.sequence}.}
\label{fig:lemma2}
\end{figure}

\begin{lemma}
Conditional on a node $x$ being viral, with probability at least $1-2w^3e^{-\epsilon^4w}$ over the initial configuration, there exists a sequence of at most $w^2$ proposed flips such that the $(w/4)$-neighborhood of $x$ becomes monochromatic.
\end{lemma}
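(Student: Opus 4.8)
The plan is to exhibit an explicit sequence of proposed flips, ordered by concentric diamonds, and to show that on a high-probability event about the initial configuration, every flip in this sequence succeeds. Place the viral node $x$ at the origin and assume (without loss of generality) that its neighborhood has positive bias, so $b_0(x) \geq (\epsilon+\epsilon^2)|N(x)|$ with $|N(x)| = (2w+1)^2$. Define, for $r \in \{0, 1/w, 2/w, \ldots, 1/2\}$, the diamond $A(r)$ of nodes $y \in N(x)$ with $\ell_1$-coordinates summing to at most $rw$. The sequence of proposed flips processes the diamonds in increasing order of $r$: first all of $A(0)$, then $A(1/w) \setminus A(0)$, and so on; within a diamond the order does not matter. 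The goal is that after processing $A(1/2)$ (which contains $N_{w/4}(x)$ in $\ell_\infty$ since $|a|+|b|\le w/2$ implies $|a|,|b|\le w/2$... actually it contains the $\ell_1$-ball of radius $w/2$, which contains $N_{w/4}(x)$), the whole region is positively monochromatic, because every node that was originally negative has been successfully flipped to positive and every originally positive node stays positive (nodes never become unhappy once their neighborhood is heavily positively biased).

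\textbf{Key steps.} First, I would define the high-probability event $\mathcal{E}$ on which the argument runs: for a suitable family of subsets $S$ of $N(x)$ and of nodes near $N(x)$ (the relevant ones being roughly rectangular strips and the portions of $N(y)$ lying outside $N(x)$ for $y$ in the diamond sequence), the initial bias of $S$ is not too negative, i.e.\ at most a small $\epsilon$-order deviation below zero. Each such bound is a one-sided Chernoff/Hoeffding estimate; since the sets $S$ may overlap with the event ``$x$ is viral'' in a way that makes the spins negatively correlated rather than independent, I invoke the generalized Chernoff bound (Theorem~\ref{thm:generalized.chernoff}). There are at most $\mathrm{poly}(w)$ such sets, each of size $\Theta(w^2)$ (or $\Theta(\epsilon w^2)$ for the thin strips — this is where the $\epsilon^4 w$ in the exponent comes from: a deviation of order $\epsilon$ on a set of size $\epsilon \cdot w^2$ gives failure probability $e^{-\Theta(\epsilon^2 \cdot \epsilon^2 w^2)} = e^{-\Theta(\epsilon^4 w^2)}$, or $e^{-\Theta(\epsilon^4 w)}$ after the union bound and appropriate bookkeeping), so a union bound gives $\Pr[\mathcal{E}^c] \leq 2w^3 e^{-\epsilon^4 w}$.

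Second, and this is the crux, I would prove inductively along the sequence that if all earlier proposed flips succeeded then the current proposed flip at node $y \in A(r)$ succeeds — i.e.\ $y$ is unhappy, which (since the process only increases positivity) means $\sigma(y) = -1$ and $y$ is $\epsilon$-negatively-biased in the current configuration, hence flips to $+1$. To check this I partition $N(y)$ into three regions: (i) the part of $N(y)$ that lies inside $N(x)$ and inside the already-processed diamond $\bigcup_{s<r}A(s)$ (all such nodes are now $+1$, contributing their full count positively); (ii) the part of $N(y)$ inside $N(x)$ but not yet processed (whose current bias equals its initial bias, controlled below by the event $\mathcal{E}$); (iii) the part of $N(y)$ outside $N(x)$ (again current bias equals initial bias, controlled by $\mathcal{E}$). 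Counting lattice points in each region — region (i) is the dominant term and is where the choice $\beta = \epsilon+\epsilon^2$ rather than $\epsilon+\epsilon^3$ is essential, since the ``head start'' in bias that $x$'s viral condition confers must survive being diluted as the diamond grows — and summing the three contributions, one gets $b(y) \leq -\epsilon|N(y)|$, so $y$ is unhappy. The lattice-point counting for the three regions is the routine but delicate part; the real obstacle is arranging the geometry and the conditioning so that the bias budget never runs out before the diamond reaches radius $1/2$. Finally, since the sequence has length $|A(1/2)| \leq w^2$ and terminates with $N_{w/4}(x)$ monochromatic, the lemma follows on the event $\mathcal{E}$.

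\textbf{Main obstacle.} I expect the hardest part to be the inductive bias bookkeeping in step two: one must show that the positive contribution accumulated from already-flipped nodes in region (i) always outweighs the worst-case negative contributions from regions (ii) and (iii) plus the $\epsilon$-threshold, uniformly over all $y$ and all stages $r$ of the diamond expansion. This is exactly the place where the precise viral threshold $\epsilon+\epsilon^2$ is tuned — too small a surplus and the induction breaks partway through the expansion (indeed the sketch notes that for $\beta = \epsilon+\epsilon^3$ the opposite conclusion holds), so the calculation has essentially no slack and must be done carefully.
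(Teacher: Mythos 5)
Your proposal follows the paper's proof essentially verbatim: the same concentric-diamond flip order $A(0)\subset A(1/w)\subset\cdots\subset A(1/2)$, the same conditioning (via the generalized Chernoff bound, to handle negative correlation with the viral event) on the initial biases of $(N(y)\cap N(x))\setminus A(r)$ and of $N(y)\setminus N(x)$, the same three-way partition of $N(y)$ in the inductive step, and the same identification of $\beta=\epsilon+\epsilon^2$ as the critical threshold. One minor slip in your informal accounting: the $e^{-\epsilon^4 w}$ bottleneck comes from the sets $N(y)\setminus N(x)$, which can be as small as $\Theta(w)$, with a required deviation of $\epsilon^2$, giving $e^{-\Theta((\epsilon^2)^2 w)}=e^{-\Theta(\epsilon^4 w)}$ directly --- a union bound over polynomially many sets cannot turn an $e^{-\Theta(\epsilon^4 w^2)}$ bound into an $e^{-\Theta(\epsilon^4 w)}$ one.
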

\begin{proof}
Recall that $N(y)$ denotes the set of nodes in the $w$-neighborhood of $y$.  
Let $\beta=\epsilon+\epsilon^2$, so that the bias required in the definition of a viral node $x$ is $\beta |N(x)|$.  Let $(0,0)$ be the coordinates of the viral node $x$ and suppose without loss of generality its neighborhood has positive bias.  Let $A(r)$ be the set of nodes $z$ in $N(x)$ with coordinates $(a,b)$ satisfying $\lvert a \rvert + \lvert b \rvert \leq rw$ for $r\in{0,1/w,2/w,...,1/2}$.  These nodes form a diamond shape centered at the origin.  Throughout this proof, we will condition on the events that the bias of certain sets of nodes in the initial configuration are close to their expectations.  
\begin{itemize}
\item First consider nodes $(N(y)\cap N(x))\setminus A(r)$ for $y\in N(x)$ and $r\in\{0,1/w,2/w,\ldots,1/2\}$.  Conditional on $x$ being viral, the expected fraction of these nodes with $+1$ spin is at least $1/2+\beta$.  We condition on the events that the fraction is at least $1/2+\beta-\epsilon^2/10$ for all relevant $y$ and $r$.  
By \thmref{thm:generalized.chernoff}, and the observation that $|(N(x)\cap N(y))\setminus A(r)|\geq w^2$ for all relevant $y$ and $r$, we prove that the probability that such an event doesn't hold is at most $e^{-\epsilon^4w^2/100}$. 
Set $\delta = 1/2-\beta$, and for each node in $(N(x)\cap N(y))\setminus A(r)$, consider a binary random variable which is set to $1$ if and only if the spin of the node is $-1$ in the initial configuration. 
For any subset of nodes $S \subseteq (N(x)\cap N(y))\setminus A(r)$, noting that the spins are negatively correlated each with marginal probability of $\delta$ of being $-1$, the probability that $S$ itself is negative monochromatic in the initial configuration is at most $\delta^{\lvert S \rvert}$. For $\gamma = \delta + \epsilon^2/10$, if the number of $-1$ spins is at most $\gamma|(N(x)\cap N(y))\setminus A(r)|$, then this implies the desired lower bound on the fraction of $+1$ spin nodes.
We apply the union bound for all $y \in N(x)$, and $r\in\{0,1/w,2/w,\ldots,1/2\}$, to conclude that with probability at least $1-w^3e^{-\epsilon^4 w^2/100}$, the fraction of nodes with $+1$ spin in $(N(x)\cap N(y))\setminus A(r)$ is at least $1/2+\beta-\epsilon^2/10$. 
\item Similarly, we condition on the events that the negative bias of nodes $N(y) \setminus N(x)$ for $y\in N(x)$ is at most $\epsilon^2 |N(y) \setminus N(x)|$.   As the expected bias of these nodes is $0$, and $|N(y) \setminus N(x)|>2w$ for all $y$, the probability for each $y$ that the negative bias is more than $\epsilon^2|N(y) \setminus N(x)|$ is at most $e^{-\epsilon^4 w/2}$ by Chernoff bounds. Again, by the union bound, the probability that all these events occur is at least $1-4w^2e^{-\epsilon^4 w/2}$.  
\end{itemize}
As events concerning nodes in $N(x)$ and nodes outside $N(x)$ are independent, we conclude that all these events hold with probability at least $(1-w^3e^{-\epsilon^4 w^2/100})(1-4w^2e^{-\epsilon^4 w/2})$, which is greater than $1-2w^3 e^{-\epsilon^4w/2}$ provided that $w$ is sufficiently large.

We consider a sequence in which all nodes in set $A(r)$ are proposed to flip before nodes in set $A(r+1/w)$ and prove by induction that, conditional on the events above, each node in this sequence prefers to adopt spin $+1$.  When $r=0$, $A(r)=\{x\}$ which, by the conditions of the lemma, prefers to adopt spin $+1$.  For $r>0$, suppose all nodes in set $A(r)$ have already adopted spin $+1$.  Let $y$ be a node in $A(r+1/w)-A(r)$.  Note that since $r+1/w\leq 1/2$, $A(r)\subset N(y)$.  The number of nodes in $A(r)$ is $2rw(rw+1) + 1$.  Nodes in the neighborhood of $y$ can be partitioned into three sets:
\begin{enumerate}
\item Nodes in $N(y)\cap A(r)=A(r)$.  As discussed above, the number of such nodes is
$2rw(rw+1)+1\geq 4w^2(r^2/2)$ and these nodes all have spin $+1$ by the inductive hypothesis for a total contribution to the bias of $N(y)$ of $$4w^2(r^2/2).$$
\item Nodes in $(N(y)\cap N(x)) \setminus A(r)$.  As $y=(a,b)$ satisfies $\lvert a \rvert + \lvert b \rvert \leq wr+1$, the number of nodes in $N(y)\cap N(x)$ is $(2w+1-a)(2w+1-b)\geq 4w^2(1-r/2 - 1/(2w))$, and as $A(r)\subset N(y)$, the number of nodes in $(N(y)\cap N(x)) \setminus A(r)$ is at least $4w^2(1-r/2-1/(2w)-(r^2/2))$. By our conditioning above, the contribution of these nodes to the bias of $N(y)$ is at least $$4w^2 \left( 1- \tfrac{r}{2}-\tfrac{1}{2w}-\tfrac{r^2}{2} \right) \left(\beta- \tfrac{\epsilon^2}{10} \right) = 4w^2 \left( 1 - \tfrac{r}{2} -\tfrac{r^2}{2} \right) \left(\beta - \tfrac{\epsilon^2}{10} \right) - 3 \epsilon w.$$
\item Nodes in $N(y) \setminus N(x)$.  The number of such nodes is $a(2w+1-b)+b(2w+1-a)-ab<2rw^2 + (2+r)w + 1$,  and by our conditioning above, the negative bias of this set is at most $\epsilon^2$ times its area.  Thus this set decreases the positive bias of $N(y)$ by at most $$4w^2 \left( \tfrac{r}{2} + \frac{3}{4w} \right)\epsilon^2 = 4w^2 \left( \tfrac{r}{2} \right) \epsilon^2 + 3 \epsilon^2 w.$$
\end{enumerate}
In sum, the total positive bias in the $w$-neighborhood of $y$ is at least:
\begin{equation} \label{eq:viral.1}
4w^2\left[\frac{r^2}{2}\left(1-\beta+\frac{\epsilon^2}{10}\right)-\frac{r}{2}\left(\beta+\frac{9\epsilon^2}{10}\right)+ \left(\beta-\frac{\epsilon^2}{10}\right)\right]
- 3 \beta w.
\end{equation}
For $y$ to prefer $+1$ spin, we need to argue that the above expression is more than $4w^2\epsilon$.  The minimum occurs at $$r=\frac{\beta+9\epsilon^2/10}{2(1-\beta+\epsilon^2/10)}.$$  At this setting the quadratic equation becomes:
$$\beta-\frac{\epsilon^2}{10}-\frac{(\beta+9\epsilon^2/10)^2}{8(1-\beta+\epsilon^2/10)}.$$
Recall that $\beta=\epsilon+\epsilon^2$.  This implies that $\beta<2\epsilon$ and hence $\beta^2<4\epsilon^2$.  Choose $\epsilon$ small enough that $8(1-\beta+\epsilon^2/10)>6$. Substituting into the equation, we see that the bias is at least:
\begin{eqnarray*}
\beta-\frac{\epsilon^2}{10}-\frac{\beta^2+(9/5)\beta\epsilon^2+(81/100)\epsilon^4}{8(1-\beta+\epsilon^2/10)}&\geq&
\epsilon+\frac{9\epsilon^2}{10}-\frac{5\epsilon^2}{6}>\epsilon + \frac{\epsilon^2}{15}.
\end{eqnarray*}
For $w > 45\beta/(\epsilon^2)$ the excess bias
$\frac{1}{15} w^2 \epsilon^2$ in this calculation exceeds the
remainder $3 \beta w$ in~\eqref{eq:viral.1}.
\end{proof}

\section{First-Passage Percolation: Details}
\label{app:fpp}

Our proofs use First-Passage Percolation (FPP) on a 2D lattice, where two nodes to be connected by an edge if they are horizontally, vertically, or diagonally adjacent.  In FPP, 
nodes have i.i.d.\ random weights with cumulative distribution function $F$.
Write $B(t)$ for the set of 
nodes whose shortest path to the origin (i.e., summing over nodes in the path) has length at most $t$.\footnote{Classically, First-Passage Percolation uses weights on links, rather than nodes, and considers does not include edges between diagonally-adjacent nodes.  However, these changes only impact the exact constants in the results below, which we are suppressing in our description.}  
Given a set $B$ of nodes and a scalar $d$, we will write $d \cdot B$ to mean the set of nodes $(x,y)$ for which $(\lfloor x/d \rfloor, \lfloor y/d \rfloor)$ lies within set $B$.
For a given radius $r$, let $D(r)$ be the set of nodes $y$ with $||y||_{\infty} \leq r$.  We wish to find bounding boxes $D(R)$ and $D(r)$, with $R > r$, such that $B(t)$ strictly contains $D(r)$ and is strictly contained in $D(R)$.  As long as $R$ and $r$ are not too far apart, these ``bounding boxes" will give reasonable bounds on the rate of growth of $B(t)$.  The following result, which is a restatement of a result due to Kesten \cite{Kesten1993}, provides such bounds, as a function of $\E[F]$.  We show how to derive this restatement from Kesten's original theorem in Appendix \ref{app:fpp}.

The following is a restatement of a result due to Kesten \cite{Kesten1993}:

\begin{theorem}[\cite{Kesten1993}]
\label{thm:fpp.original}
Suppose $\E[F] < \infty$ and furthermore $\int e^{\gamma x}F(dx) < \infty$ for some $\gamma > 0$.  Then there exists a compact set $B_0$ (depending on $F$) and fixed constants $C_1, \dotsc, C_5$, such that the following are true:
\[ \Pr[]{B(t) \subseteq t (1 + x/\sqrt{t}) B_0 } \geq 1 - C_1 t^4 e^{-C_2 x } \quad \text{for all $x \leq \sqrt{t}$} \]
and
\[ \Pr[]{t (1 - C_3 t^{-1/8} \log^{1/4}(t)) B_0 \subseteq B(t) } \geq 1 - C_4 t^2 e^{-C_5 t^{3/8} \log^{1/4}(t) }. \]
\end{theorem}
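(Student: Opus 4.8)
The plan is to obtain \thmref{thm:fpp.original} by assembling Kesten's quantitative first-passage estimates for \emph{point-to-point} travel times~\cite{Kesten1993} into a statement about the whole reachable set $B(t)$, via a union bound over lattice vertices. Let $a(v)$ denote the length of the shortest path from the origin to vertex $v$ (so $B(t)$ is, up to the unit-cube fattening in its definition, the set $\{v : a(v)\le t\}$), and let $g(v)=\lim_{n\to\infty}\frac1n\E[a(nv)]$ be the associated time constant; $B_0=\{v:g(v)\le 1\}$. The classical shape theorem (Cox--Durrett--Kesten), which holds under $\E[F]<\infty$ together with the nondegeneracy on $F(\{0\})$ that is automatically satisfied by the weight distributions used in this paper, makes $B_0$ a nonempty compact convex set with nonempty interior, so that $g$ is a norm with $c^{-1}\|v\|_\infty\le g(v)\le c\|v\|_\infty$. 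From~\cite{Kesten1993} the plan uses three inputs: (i) the lower-tail bound $\Pr{a(v)\le\E[a(v)]-y\sqrt{\|v\|}}\le C e^{-C'y}$, valid for $y$ up to order $\sqrt{\|v\|}$; (ii) the matching upper-tail bound $\Pr{a(v)\ge\E[a(v)]+y\sqrt{\|v\|}}\le Ce^{-C'y}$ in the same range, which is where the hypothesis $\int e^{\gamma x}F(dx)<\infty$ enters; and (iii) the subadditive sandwich $g(v)\le\E[a(v)]\le g(v)+C\sqrt{\|v\|}\,(\log\|v\|)^{1/2}$, whose left inequality is Fekete's lemma and whose right inequality is Kesten's rate-of-convergence-to-the-shape estimate (again using the exponential moment).

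\textbf{Outer containment (first display).} The event $B(t)\not\subseteq t(1+x/\sqrt t)B_0$ occurs iff some vertex $v$ with $g(v)>t+x\sqrt t$ has $a(v)\le t$. For such a $v$, the left inequality of (iii) gives $\E[a(v)]\ge g(v)>t+x\sqrt t$, so $\E[a(v)]-a(v)>x\sqrt t$; and because $x\le\sqrt t$ keeps the dilation factor bounded we have $\|v\|\le c't$, hence $x\sqrt t\ge c''x\sqrt{\|v\|}$, and (i) gives $\Pr{a(v)\le t}\le Ce^{-C_2 x}$. It then remains to union-bound over the relevant $v$: with probability $1-e^{-\Omega(t)}$ every vertex of $B(t)$ lies in an $\ell_\infty$-box of side $O(t)$ (a path leaving it must cross $\Omega(t)$ positive-weight edges), so there are only $O(t^2)$ vertices to consider, which already yields the stated form with prefactor $O(t^2)$; the cruder power $t^4$ in the theorem is simply the dimension-general $t^{2d}$ count inherited from Kesten. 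The restriction $x\le\sqrt t$ is exactly the regime in which the linear-in-$x$ exponent of (i) is valid and of the right order.

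\textbf{Inner containment (second display).} Here $t(1-C_3 t^{-1/8}\log^{1/4}t)B_0\not\subseteq B(t)$ iff some vertex $v$ with $g(v)\le t\bigl(1-C_3 t^{-1/8}\log^{1/4}t\bigr)$ has $a(v)>t$. For such $v$ the right inequality of (iii) gives $\E[a(v)]\le g(v)+C\sqrt{\|v\|}\,(\log\|v\|)^{1/2}\le t-C_3 t^{7/8}\log^{1/4}t+C\sqrt t\,(\log t)^{1/2}$; since $t^{7/8}\log^{1/4}t$ dominates $\sqrt t\,(\log t)^{1/2}$, the correction is absorbed and $\E[a(v)]\le t-\tfrac12 C_3 t^{7/8}\log^{1/4}t$. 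Then $\Pr{a(v)>t}\le\Pr{a(v)-\E[a(v)]>\tfrac12 C_3 t^{7/8}\log^{1/4}t}$; writing the deviation as $y\sqrt{\|v\|}$ with $\|v\|\le c't$ forces $y\ge c''t^{3/8}\log^{1/4}t$, so (ii) gives per-vertex failure probability at most $C e^{-C_5 t^{3/8}\log^{1/4}t}$, and a union bound over the $O(t^2)$ relevant vertices produces $1-C_4 t^2 e^{-C_5 t^{3/8}\log^{1/4}t}$. Finally, passing from Kesten's setting (i.i.d.\ weights on the \emph{edges} of $\mathbb{Z}^2$, no diagonal edges) to ours (i.i.d.\ weights on the \emph{nodes}, with $8$-neighbour adjacency) is a routine sandwiching on the bounded-degree graph and only changes the constants $C_1,\dots,C_5$ and the shape $B_0$, exactly as the footnote in \secref{sec:percolation} anticipates.

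\textbf{Where the work is.} Since the theorem is quoted as Kesten's, the genuinely hard estimates --- the linear variance bound, the exponential concentration, and the $\sqrt{\|v\|}\,\mathrm{polylog}$ rate of convergence to the limiting shape --- are imported wholesale; the only delicate point in the reduction is to invoke the versions of (i)--(iii) whose exponents interlock correctly, because it is the precise interaction of the $\sqrt{\|v\|}\,(\log\|v\|)^{1/2}$ mean-gap with the $e^{-C'(\text{deviation})/\sqrt{\|v\|}}$ upper tail that generates the $t^{-1/8}$ shrinkage with the $\log^{1/4}$ correction and the $e^{-C_5 t^{3/8}\log^{1/4}t}$ tail in the inner-containment bound. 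Everything else --- the equivalence between reachable-set containment and point-to-point deviation events, the control of dilation factors through $g\asymp\|\cdot\|_\infty$, and the edge-to-node lattice comparison --- is standard bookkeeping.
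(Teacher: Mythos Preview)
The paper does not supply its own proof of \thmref{thm:fpp.original}; it is stated purely as a citation (``The following is a restatement of a result due to Kesten~\cite{Kesten1993}'') and is used as a black box to derive the simplified \cref{cor:fpp.modified}. So there is no proof in the paper to compare your proposal against.

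That said, what you have written is a reasonable sketch of how Kesten himself obtains the shape-deviation bounds from his point-to-point concentration inequalities: reduce containment of $B(t)$ in (or of) a dilate of $B_0$ to pointwise deviation events $\{a(v)\le t,\ g(v)>t+x\sqrt t\}$ or $\{a(v)>t,\ g(v)\le t-\cdots\}$, sandwich $\E[a(v)]$ between $g(v)$ and $g(v)+C\sqrt{\|v\|}\,\log^{1/2}\|v\|$, invoke the exponential lower- and upper-tail bounds, and union-bound over the $O(t^2)$ relevant vertices after first confining $B(t)$ to a box of side $O(t)$. The arithmetic by which the $\sqrt{\|v\|}\log^{1/2}\|v\|$ mean gap interacts with the $e^{-Cy}$ tail to produce the $t^{-1/8}\log^{1/4}t$ shrinkage and the $e^{-Ct^{3/8}\log^{1/4}t}$ error is correct. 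Your remark that the $t^4$ prefactor is the $t^{2d}$ from Kesten's dimension-general statement (rather than the $t^2$ one would get in $d=2$) is also right, and your footnote-level comment that passing from edge weights on $\mathbb{Z}^2$ to node weights on the $8$-neighbour lattice only perturbs constants matches the paper's own disclaimer.

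In short: your proposal goes well beyond what the paper does, since the paper treats this theorem as an imported result and proves nothing about it; your sketch is essentially a faithful outline of Kesten's own argument.
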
 

Theorem \ref{thm:fpp} establishes that the active region $B(t)$ grows at a roughly linear rate, and as $t$ grows large it approximates a scaled version of a compact set $B_0$.
Actually, the result in \cite{Kesten1993} is more general; the statement of Theorem \ref{thm:fpp} restricts attention to the 2D lattice and finite $t$.

The following reformulation of Theorem \ref{thm:fpp} will be particularly useful.  For a given radius $r$, let $D(r)$ be the set of nodes $y$ with $||y||_{\infty} \leq r$.  We wish to find bounding boxes $D(R)$ and $D(r)$, with $R > r$, such that $B(t)$ strictly contains $D(r)$ and is strictly contained in $D(R)$.  As long as $R$ and $r$ are not too far apart, these ``bounding boxes" will give reasonable bounds on the rate of growth of $B(t)$.  The following Corollary provides such bounds, as a function of $\E[F]$.

\begin{corollary}
\label{cor:fpp.modified}
There exist fixed constants $\mu_1, \mu_2, C_1$, and $C_2$, such that the following is true.  Suppose the conditions of Theorem \ref{thm:fpp} hold, let $\lambda = \E[F]$, and let $t$ be sufficiently large (i.e., larger than a certain fixed constant).  Then
\[ \Pr[]{ B(t) \subseteq D( \mu_2 t/\lambda ) } \geq 1 - e^{-C_1 (t/\lambda)^{1/2}}, \text{ and } \]
\[ \Pr[]{ D( \mu_1 t / \lambda ) \subseteq B(t) } \geq 1 - e^{-C_2 (t/\lambda)^{3/8}}. \]
\end{corollary}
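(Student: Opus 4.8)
The plan is to obtain \cref{cor:fpp.modified} from Kesten's theorem \cref{thm:fpp.original} by three moves: (i) a pathwise rescaling that normalizes the mean of the weights to $1$, (ii) pinching Kesten's limiting shape $B_0$ between two $\ell_\infty$-balls $D(\mu_1')$ and $D(\mu_2')$, and (iii) absorbing the polynomial and polylogarithmic prefactors of \cref{thm:fpp.original} into its exponential error terms by taking the slack parameter as large as that theorem permits. The corollary is essentially bookkeeping on top of Kesten; the only genuine content is the positivity of the time constant invoked in step (ii), and the observation that the constants can be kept uniform over the (few) weight laws we actually use.

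Step (i), rescaling. Given weights with c.d.f.\ $F$ and mean $\lambda$, let $\tilde F$ be the law of $X/\lambda$ for $X\sim F$, so $\E[\tilde F]=1$; coupling the two weightings by $\tilde\omega_u=\omega_u/\lambda$, every path has weight exactly $1/\lambda$ times its $\tilde F$-weight, hence $B_F(t)=B_{\tilde F}(t/\lambda)$ as events. It therefore suffices to prove both inclusions for $\tilde F$ at time $t':=t/\lambda$ and then substitute back. Here $\tilde F$ inherits a finite exponential moment from $F$, and $\tilde F(0)=F(0)=0$ (all our weight laws---exponentials and sums of exponentials---have no atom at $0$), so $\tilde F$ sits strictly below the site-percolation threshold of the diagonally-connected lattice, which is exactly the condition guaranteeing that Kesten's $B_0$ is compact with nonempty interior. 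I would also note once that passing from edge weights without diagonals (Kesten's setting) to node weights with diagonals (ours) only alters constants, so \cref{thm:fpp.original} applies verbatim.

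Step (ii), pinching $B_0$. Apply \cref{thm:fpp.original} to $\tilde F$. The inner inclusion $D(\mu_1')\subseteq B_0$ is elementary: the straight lattice path to a node at $\ell_\infty$-distance $m$ uses $O(m)$ nodes, so the time constant in every direction is $O(\E[\tilde F])=O(1)$, forcing $B_0$ to contain a fixed $\ell_\infty$-ball. The outer inclusion $B_0\subseteq D(\mu_2')$ is the classical positivity of the time constant---valid because $\tilde F(0)=0$ is below the percolation threshold---combined with compactness of the set of directions, which makes that positive lower bound uniform over directions. For the fully general statement one must also check that the constants produced by \cref{thm:fpp.original} do not blow up over the relevant family of laws; in practice the outer inclusion is invoked only for exponential weights (in \cref{lem:unhappy}), where $\tilde F$ is a single fixed law, while for the coupon-collector sums in \cref{lem:big.ball} only the lower bound is needed, and there the rescaled laws form a precompact family degenerating to the point mass at $1$ (for which first passage is deterministic), so Kesten's constants can be taken uniform; alternatively one compares below by a fixed exponential law using monotonicity of FPP.

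Step (iii), absorbing prefactors. In the first inclusion of \cref{thm:fpp.original} take $x=\sqrt{t'}$ (the largest value allowed): then $B_{\tilde F}(t')\subseteq 2t'B_0\subseteq D(2\mu_2' t')$, and the failure probability $C_1(t')^4 e^{-C_2\sqrt{t'}}$ is $\le e^{-(C_2/3)\sqrt{t'}}$ once $t'$ exceeds a fixed constant; set $\mu_2=2\mu_2'$ and substitute $t'=t/\lambda$ to get the first displayed bound. For the second inclusion, once $t'$ is past a fixed constant $C_3(t')^{-1/8}\log^{1/4}(t')\le\tfrac12$, so $D(\tfrac12\mu_1' t')\subseteq\tfrac12 t'B_0\subseteq B_{\tilde F}(t')$, and the failure probability $C_4(t')^2 e^{-C_5(t')^{3/8}\log^{1/4}(t')}$ is $\le e^{-(C_5/2)(t')^{3/8}}$ (use $\log^{1/4}(t')\ge 1$ to swallow the $(t')^2$); set $\mu_1=\tfrac12\mu_1'$ and substitute. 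I expect the main obstacle to be nothing more than correctly importing the positivity-of-the-time-constant fact in step (ii) and confirming that its constant need not vary across our distributions, since the half of the corollary that uses it is applied only to a single exponential law; the rescaling, the inner pinching inclusion, and the exponent bookkeeping in step (iii) are routine.
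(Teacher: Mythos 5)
Your proof follows essentially the same approach as the paper's: rescale the weights by $\lambda$, pinch Kesten's limiting shape $B_0$ between two $\ell_\infty$-balls, take $x=\sqrt{t'}$ in the first inequality of \cref{thm:fpp.original} and (after requiring $C_3 (t')^{-1/8}\log^{1/4}(t')\le\tfrac12$) absorb the polynomial prefactors into the exponentials for $t'$ large. You supply more justification than the paper records---the pathwise coupling for the rescaling, the positivity of the time constant to legitimize the pinching, and the remark on uniformity of the constants over the family of weight laws actually used---but the decomposition and the key steps are the same.
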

\begin{proof}
By rescaling, one can use $F (t / \lambda)$ rather than $F(t)$ and replace $t$ by $t / \lambda$ in Theorem \ref{thm:fpp.original}.  
Let $B_0$ be the compact set from Theorem \ref{thm:fpp} corresponding to distribution $F(t/\lambda)$, and choose $\mu_1, \mu_2$ so that $D(2\mu_1) \subseteq B_0 \subseteq D(\tfrac{1}{2}\mu_2)$.
The first statement then follows by taking $x = \sqrt{t}$ in the first inequality of Theorem \ref{thm:fpp.original} and using $B_0 \subseteq D(\tfrac{1}{2}\mu_2)$.  The second statement follows by assuming $t$ is large enough that $C_3 t^{-1/8} \log^{1/4}(t) \leq 1/2$, applying the second inequality of Theorem \ref{thm:fpp.original}, and using $D(2\mu_1) \subseteq B_0$.  The simplified forms of the RHS probabilities follow by taking $t$ sufficiently large and setting constants appropriately.
\end{proof}

\section{Proof of Lemma \ref{lem:viral.to.monochrome}}
\label{app:viral.to.monochrome}

Recall the statement of the lemma.  There exists a constant $c > 0$ such that the following is true.  Fix $r > 8 w^6 \mu_2$, choose any node $x$, suppose node $y \in N_r(x)$ is viral with positive bias, and furthermore there are no $\epsilon$-negatively-biased nodes in $N_{8r}(x)$.  Then, with probability at least 
$1 - e^{- c \epsilon^4 w}$,
the neighborhood $N_{w/4}(y)$ is positive monochromatic at time $t_2 = w^3$.

\begin{proof}
By Lemma \ref{lem:viral.sequence}, there is a sequence of $k \leq w^2$ updates that lead to the neighborhood of $y$ becoming monochromatic with probability at least $1-e^{-\Theta(\epsilon^4 w)}$.  Call this sequence of nodes $v_1, \dotsc, v_{k}$.  Since $t_2 < r / w^3 \mu_2$, Lemma \ref{lem:unhappy} implies the probability that any node within $N_{2r}(y)$ will switch spin from positive to negative, before time $t_2$, is at most $64 r e^{-C_1 t_2^{1/2}w}$.  Excluding that event, node $v_1$ will take a positive spin the first time its clock rings (say $s_1$), and then node $v_2$ will take a positive spin the first time its clock rings after $s_1$, say at time $s_2 > s_1$, and so on for each $v_i$.  This is because each node's bias can only be more positive than in the precise update sequence $(v_i)$, and by Lemma \ref{lem:viral.sequence} such bias is sufficient for each node to switch to a positive spin.  The time until every node has switched, in sequence, is distributed as the sum of $k$ exponential random variables, each with expectation $1$.  Since $k \leq w^2$, standard concentration bounds imply that the probability that this sum is greater than $w^3$ is at most $e^{-c (w^3/k) k} = e^{-c w^3}$ for some constant $c$.  Taking the union bound over these bad events, we can conclude that the required sequence of flips occurs by time $t_2$ with the required probability, and after this sequence $N_{w/4}(y)$ is monochromatic, as required.
\end{proof}

\section{Proof of Lemma \ref{lem:ball.persists}}
\label{app:ball.persists}

Recall the statement of the lemma:  fix $\epsilon > 0$ and take $w$ sufficiently large.  Suppose $B_R(x)$ is monochromatic in configuration $\sigma_t$, where $R \geq w^3$.  Then for any $t' > t$, $B_R(x)$ is monochromatic in $\sigma_{t'}$.

\begin{proof}
It suffices to show that every node in $B_R(x)$ would be happy even if all nodes outside $B_2(R)$ take the opposite spin.  We will prove this by establishing that, for every $v \in B_R(x)$, at least $(\frac{1 - \epsilon}{2})(2w+1)^2$ neighbors of $v$ lie inside $B_R(x)$.  

Consider a point $v$ lying on the boundary of $B_R(x)$.  Consider the line $L$ tangent to the boundary of $B_R(x)$ at $v$.  By symmetry, precisely half of the neighbors of $v$ (excluding those that lie on $L$) lie on either side of line $L$.  It therefore suffices to show that the number of neighbors of $v$ that lie strictly between $L$ and the boundary of $B_R(x)$ is less than $\frac{\epsilon}{2}(2w+1)^2$.  

Since the boundary of $B_R(x)$ has curvature $R^{-1} \leq w^{-3}$, and the neighborhood of $v$ is contained within a ball of radius $\theta(w)$, the boundary and $L$ differ in slope by at most $\theta(w^{-2})$ within the neighborhood of $v$.  The area between $v$ and $B_R(x)$ is therefore contained within the area between $L$ and line $L$ shifted orthogonally by a distance of $\theta(1/w)$, again because the neighborhood of $v$ is contained within a ball of radius $\theta(w)$.  If $w$ is sufficiently large, then for any point $x$ in that area, the box $x + [\pm 0.5, \pm 0.5]$ must intersect line $L$.  Since any line intersects at most $\theta(w)$ such boxes, we conclude that at most $\theta(w)$ vertices lie strictly between $L$ and the boundary of $B_R(x)$.  This is less than the required $\frac{\epsilon}{2}(2w+1)^2$, for $w$ sufficiently large.
\end{proof}

\section{Completing the Proof of Theorem \ref{thm.stmt2}}
\label{app:mainthm}

Recall the statement:

\begin{theorem}[Restatement of Theorem \ref{thm.stmt2}]
Fix $w$, take $n$ sufficiently large, and take $\epsilon > 0$ sufficiently small.  Then there exists a constant $c > 0$ such that $\Ex[x]{r_T(x)} \geq e^{c \epsilon^2 w^2}.$
\end{theorem}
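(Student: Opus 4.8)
To prove \cref{thm.stmt2}, the plan is to follow the five steps sketched in \cref{sec:outline} and chain together Lemmas~\ref{lem:viral.given.unhappy}, \ref{lem:viral.sequence}, \ref{lem:viral.to.monochrome}, \ref{lem:big.ball} and~\ref{lem:ball.persists}. Fix a node $x$ (by vertex-transitivity of the torus this is without loss of generality, and $n$ is taken large enough that $N_R(x)$, with $R$ as below, fits inside the torus). Writing $p$ for the probability that a fixed node is $\epsilon$-biased, binomial tail estimates as in \cref{cor:binomial-tail} give $p = e^{-\Theta(\epsilon^2 w^2)}$, and, using that neighborhoods at $\ell_\infty$-distance more than $2w$ involve disjoint spins, one sees that the largest $R$ with $\Pr{N_R(x) \text{ contains an } \epsilon\text{-biased node}} \le \tfrac12$ satisfies $R = e^{\Theta(\epsilon^2 w^2)}$. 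I then put $r = R / w^4$; this still equals $e^{\Theta(\epsilon^2 w^2)}$, and for $w$ large it meets every polynomial-size requirement imposed by the lemmas below: $r > 8 w^6 \mu_2$, $\,4r \le R/4$, and $w^3 + 4\log(w) r / \mu_1 \le R/(8 w^3 \mu_2)$, the last ensuring that the time $t_3$ appearing in \cref{lem:big.ball} is at most the time $t_1$ appearing in \cref{lem:unhappy}.

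Next I would exhibit a not-too-unlikely favorable initial configuration. Fix a sublattice $\Lambda$ of spacing a large constant multiple of $w$, so that distinct nodes of $\Lambda$ have disjoint $(3w)$-neighborhoods; for $v \in \Lambda$ the events ``$v$ is viral with positive bias'' are independent, each of probability $p'' = \Theta(p)\cdot e^{-\Theta(\epsilon^3 w^2)}$ (the factor being exactly the content of \cref{lem:viral.given.unhappy}). Let $y$ be the first such node, in lexicographic order, inside $N_r(x)$, when one exists, and let $A$ be the event that $y$ exists \emph{and} $N_R(x)$ contains no $\epsilon$-negatively-biased node. Since $\Pr{y \text{ exists}} = 1 - (1-p'')^{\Theta((r/w)^2)} = \Theta(r^2/R^2)\cdot e^{-\Theta(\epsilon^3 w^2)}$ (using the calibration $\Theta(R^2 p / w^2) = \Theta(1)$ of $R$), and since both ``$y$ exists'' and ``no $\epsilon$-negatively-biased node in $N_R(x)$'' are increasing events in the initial spins (the latter having probability at least $\tfrac12$ by the choice of $R$), the FKG inequality gives $\Pr{A} \ge e^{-\Theta(\epsilon^3 w^2)}$, the polynomial factor $r^2/R^2$ being absorbed into the exponent.

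Conditional on $A$, I would then let the dynamics run and show that with probability at least $\tfrac12$ node $x$ comes to lie in a monochromatic region of radius $r$ by the fixed time $t_3 = w^3 + 4\log(w) r/\mu_1$. Indeed, \cref{lem:unhappy} with radius $R$ guarantees (with doubly-exponentially small failure probability) that no $\epsilon$-negatively-biased node reaches $N_{R/4}(x) \supseteq N_{4r}(x)$ before time $t_1 = R/(8 w^3 \mu_2) \ge t_3$; on that event, \cref{lem:viral.to.monochrome} makes $N_{w/4}(y)$ positively monochromatic by time $w^3$, and then \cref{lem:big.ball} --- invoked with its radius parameter set to $2r$ --- grows this to a monochromatic region containing all of $N_r(x)$ (since $y \in N_r(x)$) by time $t_3$. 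Because $r \ge w^3$, \cref{lem:ball.persists} then implies the Euclidean ball $B_r(x) \subseteq N_r(x)$ stays monochromatic through the end of the dynamics, so $r_T(x) \ge r/\sqrt2 = \Omega(r)$. Combining the two halves, $\Pr{r_T(x) \ge \Omega(r)} \ge \tfrac12 \Pr{A} \ge e^{-\Theta(\epsilon^3 w^2)}$, whence $\Ex[x]{r_T(x)} \ge \Omega(r)\cdot e^{-\Theta(\epsilon^3 w^2)} = e^{\Theta(\epsilon^2 w^2) - \Theta(\epsilon^3 w^2)}$, which is at least $e^{c \epsilon^2 w^2}$ for a suitable $c>0$ once $\epsilon$ is small enough (to dominate the $\epsilon^3 w^2$ term) and $w$ is then large enough (to absorb the remaining $\mathrm{poly}(w)$ factors).

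The step I expect to be the main obstacle is the conditioning in the previous paragraph. The failure probability furnished by \cref{lem:viral.sequence} --- and hence those inherited by \cref{lem:viral.to.monochrome} and \cref{lem:big.ball} --- is of order $e^{-\Theta(\epsilon^4 w)}$, which is \emph{larger} than $\Pr{A} = e^{-\Theta(\epsilon^3 w^2)}$ whenever $w \gg 1/\epsilon$; so one cannot simply take a union bound of $A^c$ against those failure events. Rather, one must check that conditioning on $A$ leaves the relevant failure probability at $e^{-\Theta(\epsilon^4 w)}$. This is where the spaced-out sublattice $\Lambda$ pays off: the spins governing the cascade out of $y$ are disjoint from the spins used to single $y$ out among the other candidates, so conditioning on $A$ amounts, near $y$, to conditioning on an increasing event; since the success events of Lemmas~\ref{lem:viral.sequence}--\ref{lem:big.ball} are also increasing in those spins, FKG shows the conditioning can only help them. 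The other failure terms --- the doubly-exponentially small one from \cref{lem:unhappy} and the $e^{-\Theta(w^2)}$-scale terms from the block-density Chernoff bounds inside the proof of \cref{lem:big.ball} --- are small enough that the naive bound (ignore the conditioning, then divide by $\Pr{A}$) already disposes of them.
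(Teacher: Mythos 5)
Your proposal is correct and follows the same overall architecture as the paper's proof in Appendix~\ref{app:mainthm}: pick $R = e^{\Theta(\epsilon^2 w^2)}$ calibrated so that $N_R(x)$ has roughly constant probability of containing an $\epsilon$-biased node; set $r = R/w^4$; argue that with probability $e^{-\Theta(\epsilon^3 w^2)}$ there is a positively-viral node $y$ in $N_r(x)$ while $N_R(x)$ contains no $\epsilon$-negatively-biased node; and then chain Lemmas~\ref{lem:unhappy}, \ref{lem:viral.to.monochrome}, \ref{lem:big.ball}, \ref{lem:ball.persists} to conclude $r_T(x) \geq \Omega(r)$, with the $e^{-\Theta(\epsilon^3 w^2)}$ loss being eventually dominated by $r = e^{\Theta(\epsilon^2 w^2)}$ when $\epsilon$ is small. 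Where you differ from the paper is in how the favorable initial configuration is exhibited. The paper conditions on the \emph{closest} $\epsilon$-biased node to $x$ (within $N_{r/2}(x)$) and then applies Lemma~\ref{lem:viral.given.unhappy} to upgrade ``$\epsilon$-biased'' to ``viral'', using an ad hoc monotonicity remark (``conditioning on the positive bias of $y$ can only make the distribution of bias of any other node more positive'') to handle the remaining conditioning. You instead plant a sparse sublattice $\Lambda$ of spacing $\Theta(w)$, so that viral events at distinct sublattice points become genuinely independent, and then invoke FKG twice: once to lower-bound $\Pr{A}$, and once to argue that conditioning on $A$ cannot hurt the success events of the cascade lemmas. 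This buys you a cleaner treatment of the conditioning issue you correctly flag: the failure probabilities $e^{-\Theta(\epsilon^4 w)}$ from Lemma~\ref{lem:viral.sequence} and its descendants are far larger than $\Pr{A} = e^{-\Theta(\epsilon^3 w^2)}$ once $w \gg 1/\epsilon$, so one must work with probabilities conditional on $A$ rather than unconditional ones --- a point the paper's ``taking the union bound over all excluded events'' glides past. Your sublattice/FKG refinement is a valid and arguably more careful way to close that gap; the underlying strategy is the same.
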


\begin{proof}
Pick an arbitrary node $x$.  For any other node $z$, Chernoff bounds imply that the probability $z$ is unhappy in the initial configuration is $e^{-c_1 \epsilon^2 w^2}$ for some constant $c_1 > 0$.   Choose any $R$ such that, with probability lying in $[1/3, 2/3]$, no node in $N_R(x)$ is $\epsilon$-biased.  This will occur if the number of nodes lying in $N_R(x)$ is $e^{\Theta(\epsilon^2 w^2)}$, so we have $R = e^{c_2 \epsilon^2 w^2}$ for some constant $c_2$.

We now wish to condition on a certain sequence of events.  Set $r = R / w^4$.  First, we will condition on there existing at least one $\epsilon$-biased node in $N_{r/2}(x)$.  This occurs with probability $\Theta(r^2 / R^2) = \Theta(1 / w^8)$.  Let $y$ be the closest such node to $x$.  Assume, by symmetry, that $y$ is positively biased.  We will then condition on node $y$ being viral.  By Lemma \ref{lem:viral.given.unhappy}, this occurs with probability at least $c_3 e^{-c_4 \epsilon^3 w^2}$ for constants $c_3, c_4 > 0$, conditional on $y$ being $\epsilon$-biased.  Finally, we will condition on there not existing any $\epsilon$-negatively-biased nodes within $N_R(x)$.  Since conditioning on the positive bias of $y$ can only make the distribution of bias of any other node more positive, this event has conditional probability at least $1/3$, from our choice of $R$.

Let $t_0 = R / (8w^3\mu_2) = \frac{1}{8w^3\mu_2} e^{c_2 \epsilon^2 w^2}$.  By Lemma \ref{lem:unhappy}, there is no negatively-biased node contained in $N_{R/4}(x)$ at any time $t \leq t_0$, with probability at least $1 - 8R e^{- \Theta(t_0^{1/2} w)} = 1 - e^{-e^{\Theta(\epsilon^2 w^2)}}$.  Since $R / 4 > 2r$ and $t_0 = r w^4 / (8 w^3 \mu_2) > w^3 + 2 \log(w) r / \mu_1$ for sufficiently large $w$, Lemma \ref{lem:big.ball} implies that, with probability at least $1 - e^{-c_5 \epsilon^4 w}$ for some constant $c_5 > 0$, node $x$ is contained in a monochromatic region of radius at least $r$, centered at $y$, at some time $t \leq T$.  Since $y \in N_{r/2}(x)$ implies $||x-y||_2 \leq r$, Lemma \ref{lem:ball.persists} then implies that the origin is contained in a monochromatic square of radius at least $r/2$ in the final configuration.

Taking the union bound over all excluded events, we conclude that with probability at least $e^{-c_6 \epsilon^3 w^2}$, at time $T$ we will have $r_T(x) \geq r = e^{c_7 \epsilon^2 w^2}$ for some constants $c_6,c_7 > 0$.  We therefore have $\Ex[]{r_T(x)} \geq e^{\epsilon^2(c_7  - c_6 \epsilon) w^2}$, which yields the desired result for sufficiently small $\epsilon$.
\end{proof}

\section{An Upper Bound on Monochromatic Region Size}
\label{sec:upperbound}

In this section we establish that the exponent in the bound from Theorem \ref{thm.stmt2} is asymptotically tight with respect to $\epsilon$ and $w$.

\begin{theorem}
Fix $w$, take $n$ sufficiently large, and take $\epsilon > 0$ sufficiently small.  Then there exists a constant $c > 0$ such that $\Ex[x]{r_T(x)} \leq e^{c \epsilon^2 w^2}.$
\end{theorem}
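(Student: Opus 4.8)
The plan is to prove a matching upper bound $\Ex[x]{r_T(x)} \leq e^{c\epsilon^2 w^2}$ by showing that, with high probability over the initial configuration and dynamics, an arbitrary node $x$ lies within distance $e^{O(\epsilon^2 w^2)}$ of a node that never changes color, hence at the end of the process sits near a ``frozen'' boundary between the two colors. The driving observation is the mirror image of the lower-bound intuition: $\epsilon$-biased nodes in $\sigma_0$ are exponentially rare, appearing with probability $e^{-\Theta(\epsilon^2 w^2)}$ by Chernoff bounds, so with constant probability there is \emph{no} $\epsilon$-biased node in $N_R(x)$ for $R = e^{\Theta(\epsilon^2 w^2)}$ — and more strongly, a random node is very likely (probability $1-o(1)$) to be within distance $e^{O(\epsilon^2 w^2)}$ of some region of the torus that is ``stable'' in the sense that no node in it is ever unhappy throughout the dynamics. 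A node that is never unhappy never flips, so it retains its $\sigma_0$ color forever; since $\sigma_0$ colors are i.i.d.\ fair coins, among the nodes in such a stable region roughly half are $+1$ and half are $-1$, and $x$ cannot be part of a monochromatic ball that extends very far past such a region.

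First I would make precise the notion of a \emph{permanently happy} region. Call a node \emph{$\epsilon'$-balanced in $\sigma_0$} if $|b_0(z)| \leq \epsilon' (2w+1)^2$ for a suitable $\epsilon' < \epsilon$ (say $\epsilon' = \epsilon/2$); a simple counting/coupling argument — a node flips only if some neighbor flipped, each flip changes a neighbor's bias by $2$, and a node must accumulate bias crossing from $\epsilon'$ to $\epsilon$ before becoming unhappy — shows that a large block all of whose nodes are $\epsilon'$-balanced, and which is ``insulated'' by a further $\epsilon'$-balanced collar of width $O(w/\epsilon')$ (enough flips would be needed to corrupt the block that the collar must be breached first, ad infinitum), actually contains no node that is ever unhappy, by an induction on time analogous to (indeed, dual to) Lemma~\ref{lem:viral.sequence}. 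The probability that a \emph{fixed} block of side $\Theta(w/\epsilon)$ is entirely $\epsilon'$-balanced is $e^{-O(\epsilon^2 w^2)} \cdot e^{-O(\epsilon^2 w^2)} = e^{-\Theta(\epsilon^2 w^2)}$ — a product over $O(1/\epsilon^2)$ cells each succeeding with probability $e^{-\Theta(\epsilon^2 w^2)}$ (one must be slightly careful since the events for nearby cells are correlated, but their neighborhoods overlap in a bounded number of cells, so the FKG inequality or a direct bound on $O(1/\epsilon^2)$ \emph{disjoint} sub-blocks suffices).

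Next I would tile the torus by such blocks and argue, via a second-moment or direct Borel–Cantelli-style estimate over annuli around $x$, that with probability $1 - o(1)$ there is a permanently-happy block within $\ell_\infty$-distance $R' = e^{C\epsilon^2 w^2}$ of $x$: the expected number of qualifying disjoint blocks in $N_{R'}(x)$ is $(R'/w)^2 \cdot e^{-\Theta(\epsilon^2 w^2)}$, which is large once $C$ is a big enough constant, and concentration (again via bounded dependence among the relevant events, Chebyshev suffices) makes the event overwhelmingly likely. Inside that block, the $\pm 1$ spins are i.i.d.\ and permanent, so the block is not monochromatic in $\sigma_T$ — in fact it contains oppositely-colored nodes at distance $O(1)$ from each other — which forces $r_T(x) \le \|x - z\|_\infty + O(w) \le R' + O(w)$ for the nearer of those two nodes $z$. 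To finish, I would bound the expectation: $\Ex[x]{r_T(x)} \le R' + O(w) + n\sqrt{2}\cdot o(1)$; choosing the failure probability to decay faster than $1/n^2$ (achievable since each annulus contributes an $e^{-e^{\Theta(\epsilon^2w^2)}}$-type tail once we take enough annuli, and $n$ is only polynomially related to... here one must take a little care, picking $R'$ large enough that the failure probability beats $1/n^2$ while still being $e^{O(\epsilon^2 w^2)}$, which is fine since $n$ can be taken as large as we like relative to $e^{\epsilon^2 w^2}$ — wait, that's backwards; instead bound the $o(1)$ term by noting $r_T(x) \le n$ always and the failure probability is at most, say, $e^{-\epsilon^2 w^2}$ uniformly, giving $\Ex{r_T(x)} \le R' + n e^{-\epsilon^2 w^2}$, and since the theorem takes $n$ sufficiently large \emph{after} fixing $w,\epsilon$ this is not obviously $e^{O(\epsilon^2w^2)}$).

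\textbf{The main obstacle} is precisely this last interaction between the three parameters: we need the ``$x$ is far from every permanently-happy region'' failure probability to be small enough, as a function of $n$, that its contribution $n \cdot \Pr{\text{failure}}$ to $\Ex{r_T(x)}$ stays $e^{O(\epsilon^2 w^2)}$ even as $n \to \infty$. The resolution is to observe that over $k$ successive annuli of geometrically increasing radius starting near $e^{\Theta(\epsilon^2 w^2)}$, the probability that \emph{all} of them lack a permanently-happy block decays doubly-exponentially in $k$ (each annulus independently fails with probability bounded below $1$ by the second-moment argument on its $\Omega((R'/w)^2)$ disjoint candidate blocks), so taking $k = \Theta(\log\log n)$ annuli already drives the failure probability below $1/n^3$ while the outermost radius is still only $e^{O(\epsilon^2 w^2)}$ — here we crucially use that $\log\log n$ annuli, each a constant factor larger, multiply the radius by only $2^{\Theta(\log\log n)} = \mathrm{polylog}(n)$, which is absorbed into the constant in the exponent once $w$ is fixed... or, cleanly, we simply set the starting radius to $e^{C\epsilon^2w^2}$ for $C$ large and let all $\Theta(\log n)$ annuli out to radius $n$ be candidates, so failure probability is $e^{-\Omega((e^{C\epsilon^2w^2}/w)^2)} \ll n^{-2}$. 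The rest — establishing permanent happiness of a balanced-and-collared block by the time-induction, and the routine binomial tail estimates — parallels machinery already developed for Lemma~\ref{lem:viral.sequence} and Lemma~\ref{lem:ball.persists}, so I expect it to be straightforward though tedious.
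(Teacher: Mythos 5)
There is a fatal gap in your central concept. You claim that a block of $\epsilon'$-balanced nodes (with a balanced collar) is \emph{permanently happy}, i.e., no node in it is ever unhappy, so its nodes retain their i.i.d.\ initial colors forever. This is false, and in fact contradicted by the machinery of the paper's own lower bound. The spreading mechanism analyzed in Lemma~\ref{lem:big.ball} shows that a monochromatic region invades any adjacent region that contains no node biased $\epsilon$ in the \emph{opposite} direction: once an adjacent $w/4$-block turns monochromatic positive, the bias of every node in the next block jumps by $\Omega(w^2) \gg \epsilon (2w+1)^2$, making all its negatively-spinned nodes unhappy regardless of how close to zero their initial bias was. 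Your $\epsilon'$-balanced block contains no $\epsilon$-negatively-biased nodes, so it is maximally susceptible to this invasion; the ``collar'' of more balanced nodes offers no resistance at all — the contagion sweeps straight through it. A correct mental model of this parameter regime is the opposite of yours: most nodes are \emph{not} frozen, and almost every node eventually flips into an exponentially large monochromatic region — this is precisely the content of Theorem~\ref{thm.stmt2}, which your ``dense frozen nodes'' picture would contradict.

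A secondary, self-contained error is your probability estimate: a single node is $\epsilon'$-balanced (bias $\leq \epsilon'(2w+1)^2$) with probability $1 - e^{-\Theta(\epsilon'^2 w^2)}$, i.e., it is the \emph{typical} case, so a fixed block of $\Theta(w^2/\epsilon^2)$ nodes is entirely $\epsilon'$-balanced with probability close to $1$, not $e^{-\Theta(\epsilon^2 w^2)}$. The paper instead proceeds by showing that a viral positively-biased node retains its positive spin at time $T$ with constant probability (via Lemmas~\ref{lem:unhappy}, \ref{lem:big.ball}, \ref{lem:ball.persists}), defining the sets $Z^{\pm}$ of such surviving viral nodes of each sign, showing both are dense at scale $R' = e^{\Theta(\epsilon^2 w^2)}$ by independence at distance $>2R$, and observing that a monochromatic square cannot contain members of both $Z^+$ and $Z^-$ — hence $r_T(x) = O(R')$ in expectation. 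That approach uses competing \emph{biased} regions to create boundaries, rather than trying to preserve unbiased regions, which is the key idea missing from your proposal.
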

\begin{proof}
Suppose that some node $y$ is viral in the initial configuration, say with positive bias.  We claim that, with constant probability, $y$ will have positive spin in the final configuration.  To prove the claim, choose $R$ as in the proof of Theorem \ref{thm.stmt2} in Section \ref{sec:mainthm}.  Then, with constant probability, there do not exist any $\epsilon$-negatively-biased nodes within $N_R(x)$.  Invoking Lemma \ref{lem:unhappy} and Lemma \ref{lem:big.ball} with $x = y$, then Lemma \ref{lem:ball.persists}, we conclude that with high probability node $y$ will be contained in a monochromatic region with positive spin in the final configuration, and hence $\sigma_T(y) > 0$ as claimed.

Let $Z^+$ be the set of initially positively-biased viral nodes in the grid whose spin in the final configuration is also positive.  Let $Z^-$ be the corresponding set of negatively-biased viral nodes.
We can conclude that, for any given node $y$, the probability that $y$ lies in $Z^+$ is at least a constant times the probability that it is viral,
which is $e^{-\Theta(\epsilon^2 w^2)}$ (by Lemma \ref{lem:viral.given.unhappy} and the fact that a given node is $\epsilon$-biased with probability at least $e^{-\Theta(\epsilon^2 w^2)}$).  Moreover, the event that $y$ lies in $Z^+$ is independent of the event that $z$ lies in $Z^+$ if the distance between $y$ and $z$ is greater than $2R$, as these events depended only on what occurred within a neighborhood of radius $R$ from each node.  We can conclude that for some $R' = e^{\Theta(\epsilon^2 w^2)}$, any given neighborhood of radius $R'$ will contain a node from $Z^+$ (or, symmetrically, from $Z^-$) with constant probability.

Now choose any node $x$ and constant $r > 0$.  
If $r_T(x) \geq r$, then it must be that at least one of the four axis-aligned squares with $x$ as a vertex, and side-length $r$, is monochromatic.  This event cannot occur if each square contains both a vertex in $Z^+$ and a vertex in $Z^-$.  We therefore have that $\Ex[x]{r_T(x)}$ is at most a constant times the expectation of the minimal $r$ such that a square with $x$ as a corner contains both a vertex in $Z^+$ and a vertex in $Z^-$.  However, as established above, each neighborhood of radius $R' = e^{\Theta(\epsilon^2 w^2)}$ has a constant probability of containing a node from $Z^+$ or $Z^-$.  Thus, the expected radius of a neighborhood that does not contain nodes from both is $O(R') = e^{O(\epsilon^2 w^2)}$.  
%
We can therefore conclude that $\Ex[x]{r_T(x)} \leq e^{\Theta(\epsilon^2 w^2)}$, as required.
\end{proof}

\end{document}